\documentclass[a4paper,USenglish]{lipics-v2018}
\usepackage[utf8]{inputenc}
\usepackage{cite}   
\usepackage[]{todonotes}
\usepackage{graphicx,amssymb,amsmath,url,thm-restate}
\usepackage{xspace}
\usepackage{mathtools}
\usepackage{mathrsfs}
\usepackage{complexity}
\tikzset{font=\small}

\newtheorem{problem}{Problem}

\EventShortTitle{SoCG 2018}
\ArticleNo{53} %

\author{Fabian Klute}{Algorithms and Complexity Group, TU Wien\\
	Vienna, Austria}{fklute@ac.tuwien.ac.at}{}{}
\author{Martin Nöllenburg}{Algorithms and Complexity Group, TU Wien\\
	Vienna, Austria}{noellenburg@ac.tuwien.ac.at}{0000-0003-0454-3937}{}
\authorrunning{F. Klute, M. Nöllenburg}
\title{Minimizing Crossings in Constrained Two-Sided Circular Graph Layouts}
\subjclass{\ccsdesc[500]{Human-centered computing~Graph drawings}, 
\ccsdesc[500]{Mathematics of computing~Graph algorithms}, 
\ccsdesc[300]{Theory of computation~Computational geometry}}
\keywords{Graph Drawing, Circular Layouts, Crossing Minimization, Circle Graphs, Bounded-Degree Maximum-Weight Induced Subgraphs}
\Copyright{Fabian Klute and Martin Nöllenburg}

\relatedversion{This is the full version of a paper with the same title appearing in the proceedings of the 34th International Symposium on Computational Geometry (SoCG) 2018.}

\newcommand{\order}{\ensuremath{ \pi }\xspace}

\newcommand{\intervalset}{\ensuremath{\mathcal{I}}\xspace}
\newcommand{\overlap}{\ensuremath{\mathcal{P}}\xspace}
\newcommand{\iI}{\ensuremath{I}\xspace}
\newcommand{\iJ}{\ensuremath{J}\xspace}
\newcommand{\iK}{\ensuremath{K}\xspace}
\newcommand{\sI}{\ensuremath{a}\xspace}
\newcommand{\eI}{\ensuremath{b}\xspace}
\newcommand{\sJ}{\ensuremath{c}\xspace}
\newcommand{\eJ}{\ensuremath{d}\xspace}
\newcommand{\weight}[1]{\ensuremath{w(#1)}\xspace}
\newcommand{\drawing}{\ensuremath{\Gamma}\xspace}
\newcommand{\tsdrawing}{\ensuremath{\Delta}\xspace}
\newcommand{\inEdges}{\ensuremath{\mathcal E_1}\xspace}
\newcommand{\exEdges}{\ensuremath{\mathcal E_2}\xspace}

\newcommand{\dms}[1]{\ensuremath{#1\text{MWOS}}\xspace}

\DeclareMathOperator{\length}{\ell}
\DeclareMathOperator{\fit}{fit}
\DeclareMathOperator{\Ispan}{span}
\newcommand{\CC}{C\nolinebreak\hspace{-.05em}\raisebox{.4ex}{\tiny\bf +}\nolinebreak\hspace{-.10em}\raisebox{.4ex}{\tiny\bf +}}
\newcommand{\GG}{g\nolinebreak\hspace{-.05em}\raisebox{.4ex}{\tiny\bf +}\nolinebreak\hspace{-.10em}\raisebox{.4ex}{\tiny\bf +}}

\newcommand{\ksubgraph}[1]{{$#1$-BDMWIS}\xspace}
\newcommand{\kiss}[1]{{max-weight $#1$-overlap set}\xspace}
\newcommand{\ksubgraphN}[1]{{\textsc{$#1$-BDMWIS}}\xspace}
\newcommand{\kissN}[1]{{\textsc{Max-Weight $#1$-Overlap Set}}\xspace}

\nolinenumbers

\hideLIPIcs

\begin{document}
	\maketitle
	
	\begin{abstract}
		Circular layouts are a popular graph drawing style, where vertices are placed on a circle and edges are drawn as straight chords. 
		Crossing minimization in circular layouts is \NP-hard. One way to allow for fewer crossings in practice are two-sided layouts that draw some edges as curves in the exterior of the circle. 
		In fact, one- and two-sided circular layouts are equivalent to one-page and two-page book drawings, i.e., graph layouts with all vertices placed on a line (the \emph{spine}) and edges drawn in one or two distinct half-planes (the \emph{pages}) bounded by the spine. 
		In this paper we study the problem of minimizing the crossings for a fixed cyclic vertex order by computing an optimal $k$-plane set of exteriorly drawn edges for $k \ge 1$, extending the previously studied case $k=0$.
		We show that this relates to finding bounded-degree maximum-weight induced subgraphs of circle graphs, which is a graph-theoretic problem of independent interest. 
		We show \NP-hardness for arbitrary $k$, present an efficient algorithm for $k=1$, and generalize it to an explicit \XP-time algorithm for any fixed $k$.
		For the practically interesting case $k=1$ we implemented our algorithm and present experimental results that confirm the applicability of our algorithm.
	\end{abstract}

	\section{Introduction}
		Circular graph layouts are a popular drawing style to visualize graphs, e.g., in biology~\cite{ksbcgh-ciacg-09}, and circular layout algorithms~\cite{st-cda-13} are included in standard graph layout software~\cite{jm-gds-04} such as yFiles, Graphviz, or OGDF. 
		In a circular graph layout all vertices are placed on a circle, while the edges are drawn as straight-line chords of that circle, see Fig.~\ref{fig:interior}. 
		Minimizing the number of crossings between the edges is the main algorithmic problem for optimizing the readability of a circular graph layout. 
		If the edges are drawn as chords, then all crossings are determined solely by the order of the vertices.
		By cutting the circle between any two vertices and straightening it, circular layouts immediately correspond to one-page book drawings, in which all vertices are drawn on a line (the \emph{spine}) and all edges are drawn in one half-plane (the \emph{page}) bounded by the spine.
		Finding a vertex order that minimizes the crossings is \NP-hard~\cite{masuda1987np}. 
		Heuristics and approximation algorithms have been studied in numerous papers, see, e.g.,~\cite{bb-crcl-04,SSSV96,kmn-eebda-17}.

		Gansner and Koren~\cite{gansner2006improved} presented an approach to compute improved circular layouts for a given input graph $\mathcal G = (\mathcal V, \mathcal E)$ in a three-step process. 
		The first step computes a vertex order of $\mathcal V$ that aims to minimize the overall edge length of the drawing, the second step determines a crossing-free subset of edges that are drawn outside the circle to reduce edge crossings in the interior (see Fig.~\ref{fig:auto0}), and the third step introduces edge bundling to save ink and reduce clutter in the interior. 
		The layouts by Gansner and Koren draw edges inside and outside the circle and thus are called \emph{two-sided circular layouts}. 
		Again, it is easy to see that two-sided circular layouts are equivalent to two-page book drawings, where the interior of the circle with its edges corresponds to the first page and the exterior to the second page.

		Inspired by their approach we take a closer look at the second step of the above process, which, in other words, determines for a given cyclic vertex order an outerplane subgraph to be drawn outside the circle such that the remaining crossings of the chords are minimized.		
		Gansner and Koren~\cite{gansner2006improved} solve this problem in $O(|\mathcal V|^3)$ time.\footnote{The paper claims $O(|\mathcal V|^2)$ time without a proof; the immediate running time of their algorithm is~$O(|\mathcal V|^3)$.}	
		In fact, the problem is equivalent to finding a maximum independent set in the corresponding circle graph $G^\circ=(V,E)$, which is the intersection graph of the chords (see Section~\ref{sec:transform} for details). 
		The maximum independent set problem in a circle graph can be solved in $O(\ell)$ time~\cite{valiente2003new}, where $\ell$ is the total chord length of the circle graph (here $|\mathcal E| \le \ell \le |\mathcal E|^2$; see Fig~\ref{sub:1overlap} for a precise definition of $\ell$). 

\paragraph*{Contribution.}
		 We generalize the above crossing minimization problem from finding an outerplane graph to finding an outer $k$-plane graph, i.e., we ask for an edge set to be drawn outside the circle such that none of these edges has more than $k$ crossings. 
		 Equivalently, we ask for a page assignment of the edges in a two-page book drawing, given a fixed vertex order, such that in one of the two pages each edge has at most $k$ crossings. 
		 For $k=0$ this is exactly the same problem considered by Gansner and Koren~\cite{gansner2006improved}. 
		 An example for $k=1$ is shown in Fig.~\ref{fig:2sided}. 
		 More generally, studying drawings of non-planar graphs with a bounded number of crossings per edge is a topic of great interest in graph drawing, see~\cite{KobourovLM17,Liotta2014}.

		\begin{figure}[tb]
			\begin{subfigure}[t]{.28\textwidth}
				\includegraphics{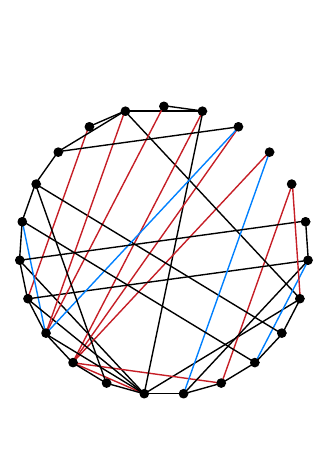}
				\subcaption{One-sided layout with\\ 125 crossings.}
				\label{fig:autoc}\label{fig:interior}
			\end{subfigure}
			\begin{subfigure}[t]{.34\textwidth}
				\includegraphics{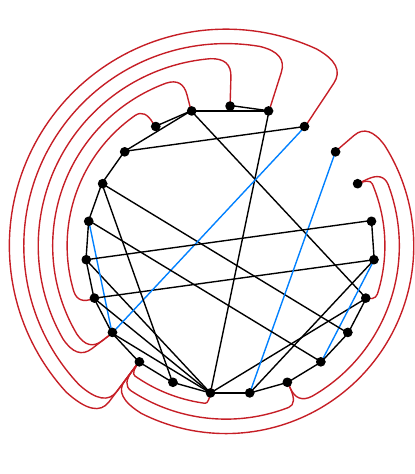}
				\subcaption{Two-sided layout for $ k = 0 $\\ with 48 crossings.}				
				\label{fig:auto0}				
			\end{subfigure}
			\begin{subfigure}[t]{.32\textwidth}
				\includegraphics{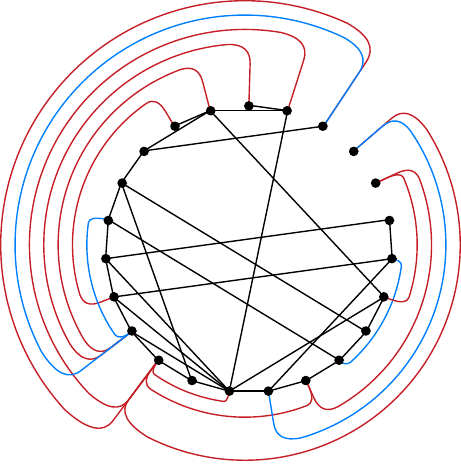}
				\subcaption{Two-sided layout for $ k = 1 $\\ with 30 crossings}				
				\label{fig:auto1}\label{fig:2sided}
			\end{subfigure}
			\caption{Circular layouts of a graph $(\mathcal G, \pi)$ (23 vertices, 45 edges) computed by our algorithms.}
			\label{fig:example}
			\label{fig:auto}			
		\end{figure}

	We model the outer $k$-plane crossing minimization problem in two-sided circular layouts as a bounded-degree maximum-weight induced subgraph (BDMWIS) problem in the corresponding circle graph (Section~\ref{sec:transform}). 
	The BDMWIS problem is a natural generalization of the weighted independent set problem (setting the degree bound $k=0$), which was the basis for Gansner and Koren's approach~\cite{gansner2006improved}.
	It is itself a weighted special case of the bounded-degree vertex deletion problem~\cite{djl-mkfp-93,bbnu-bvdpt-12,ganian2018structural}, a well-studied algorithmic graph problem of independent interest.
 	For arbitrary $k$ we show \NP-hardness of the BDMWIS problem in Section~\ref{sec:hard}. 
	Our algorithms in Section~\ref{sec:algorithms} are based on dynamic programming using interval representations of circle graphs.
	For the case $k=1$, where at most one crossing per exterior edge is permitted, we solve the BDMWIS problem for circle graphs in $O(|\mathcal E|^{4})$ time. %
	We then generalize our algorithm and obtain a problem-specific \XP-time algorithm for circle graphs and any fixed $k$, whose running time is $O(|\mathcal E|^{2k+2})$.
	We note that the pure existence of an \XP-time algorithm can also be derived from applying a metatheorem of Fomin et al.~\cite{fomin2015large} using counting monadic second order (CMSO) logic, but the resulting running times are far worse.
	Finally, in Section~\ref{sec:experiments}, we present the results of a first experimental study comparing the crossing numbers of two-sided circular layouts for the cases $k=0$ and $k=1$.

	\section{Preliminaries}
		\label{sec:transform}

		Let $ \mathcal G = (\mathcal V, \mathcal E) $ be a graph and $ \order $ a cyclic order of $\mathcal V$. 
		We arrange the vertices in  order $\order$ on a circle $C$ and draw edges as straight chords to obtain a (\emph{one-sided}) \emph{circular drawing} \drawing, see Fig.~\ref{fig:interior}. 
		Note that all crossings of \drawing are fully determined by \order: two edges cross iff their endpoints alternate in $\order$. 
		Our goal in this paper is to find a subset of edges to be drawn in the unbounded region outside~$C$ with no more than $k$ crossings per edge in order to minimize the total number of edge crossings or the number of remaining edge crossings inside~$C$.

More precisely, in a \emph{two-sided circular drawing} \tsdrawing of $(\mathcal G,\order)$ we still draw all vertices on a circle $C$ in the order \order, but we split the edges into two disjoint sets $ \inEdges $ and $ \exEdges $ with $\inEdges \cup \exEdges = \mathcal E$. 
The edges in \inEdges are drawn as straight chords, while the edges in \exEdges are drawn as simple curves in the exterior of $C$, see Fig.~\ref{fig:2sided}. 
Asking for a set \exEdges that globally minimizes the crossings in \tsdrawing is equivalent to the \NP-hard fixed linear crossing minimization problem in 2-page book drawings~\cite{mnkf-cmleg-90}.
Hence we add the additional constraint that the exterior drawing induced by \exEdges is \emph{outer $k$-plane}, i.e., each edge in \exEdges is crossed by at most $k$ other edges in \exEdges. 
This is motivated by the fact that, due to their detour, exterior edges are already harder to read and hence should not be further impaired by too many crossings. 
The parameter~$k$, which can be assumed to be small, gives us control on the maximum number of crossings per exterior edge. 
Previous work~\cite{gansner2006improved} is limited to the case $k=0$. 

\subsection{Problem transformation} %
\label{sub:problem_transformation}

		Instead of working with a one-sided input layout $\drawing$ of $( \mathcal G, \order) $ directly we consider the corresponding \emph{circle graph} $ G^\circ = (V, E) $ of $( \mathcal G, \order) $. 
		The vertex set $ V $ of $G^\circ$ has one vertex for each edge in $ \mathcal E $ and two vertices $ u,v \in V $ are connected by an edge $(u,v)$ in $E$ if and only if the chords corresponding to $u$ and $v$ cross in  \drawing, i.e., their endpoints alternate in $\order$.
		The number of vertices $|V|$ of $G^\circ$ thus equals the number of edges $|\mathcal E|$ of $\mathcal G$ and the number of edges $|E|$ of $G^\circ$ equals the number of crossings in \drawing. 
		Moreover, the degree $\deg(v)$ of a vertex $v$ in $G^\circ$ is the number of crossings of the corresponding edge in $\drawing$.

		Next we show that we can reduce our outer $k$-plane crossing minimization problem in two-sided circular layouts of $(\mathcal G, \order)$ to an instance of the following bounded-degree maximum-weight induced subgraph problem for $G^\circ$. 
		\begin{problem}[Bounded-Degree $k$ Maximum-Weight Induced Subgraph (\ksubgraphN{k})]
			\label{def:ksubgraphN}
			Let  $ G = (V,E) $ be a weighted graph with a vertex weight $ \weight{v} \in \mathbb{R}^+ $ for each $v \in V$ and an edge weight $ \weight{u,v} \in \mathbb{R}^+ $ for each $(u,v) \in E$ and let $ k \in \mathbb{N} $. 
			Find a set $ V' \subset V $ such that the induced subgraph $ G[V'] = (V',E') $ has maximum vertex degree $ k $ and maximizes the weight \[ W = W(G[V']) = \sum_{v \in V'}\weight{v} - \sum_{(u,v) \in E'}\weight{u,v}. \]
		\end{problem}
		
		For general graphs it follows immediately from Yannakakis~\cite{yannakakis1978node} that \ksubgraphN{k} is \NP-hard, but restricting the graph class to circle graphs makes the problem significantly easier, at least for constant $k$, as we  show in this paper. 

		For our reduction it remains to assign suitable vertex and edge weights to $G^\circ$.
		We define $w(v) = \deg(v)$ for all vertices $v \in V$ and $w(u,v) = 1$ or, alternatively, $w(u,v) = 2$ for all edges $(u,v) \in E$, depending on the type of crossings to minimize. 
		
		\begin{lemma}
		\label{lem:ma1indset}
		Let $ \mathcal G = (\mathcal V, \mathcal E) $ be a graph with cyclic vertex order $ \order $ and $ k\in \mathbb{N} $. Then a maximum-weight degree-$k$ induced subgraph in the corresponding weighted circle graph $ G^\circ = (V,E)$ induces an outer $k$-plane graph that minimizes the number of crossings in the corresponding two-sided layout \tsdrawing of $(\mathcal G, \order)$.
		\end{lemma}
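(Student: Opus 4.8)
The plan is to establish a bijection between two-sided circular drawings \tsdrawing of $(\mathcal G,\order)$ whose exterior part is outer $k$-plane and induced subgraphs $G^\circ[V']$ of bounded degree $k$, and then to verify that, under the chosen weights, the objective $W(G^\circ[V'])$ counts exactly the crossings that are \emph{removed} from the one-sided drawing \drawing. First I would make the correspondence precise: a choice of exterior edge set $\exEdges \subseteq \mathcal E$ corresponds to a vertex subset $V' \subseteq V$ of $G^\circ$ via the identification of $\mathcal E$ with $V$ described in Section~\ref{sub:problem_transformation}. The key observation, which I would state as the first step, is that two exterior edges cross in \tsdrawing if and only if the corresponding chords cross in \drawing; this is because the exterior of the circle is again a topological disk (after adding the point at infinity), so curves between boundary points behave combinatorially like chords — two exterior curves can be drawn crossing-free precisely when their endpoints do not alternate in \order, and must cross otherwise. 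Hence the number of crossings among exterior edges equals $|E'|$, where $E' = E \cap \binom{V'}{2}$, and the outer $k$-plane condition on $\exEdges$ translates exactly into the degree-$k$ bound on $G^\circ[V']$.

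Next I would count crossings in \tsdrawing. There are three types: (i) crossings between two interior chords, i.e., between edges of $\inEdges = \mathcal E \setminus \exEdges$; (ii) crossings between two exterior curves, i.e., within $\exEdges$; and (iii) crossings between an interior chord and an exterior curve — but these can always be avoided, since an interior chord and an exterior curve sharing no endpoint lie in disjoint regions, so we may assume \tsdrawing has no type-(iii) crossings. Therefore the total number of crossings in \tsdrawing is $|E| - (\text{crossings incident to at least one exterior edge in }\drawing) + |E'|$. Writing $\cro(\exEdges)$ for the number of edges of $G^\circ$ with at least one endpoint in $V'$, we have $\cro(\exEdges) = \sum_{v\in V'}\deg(v) - |E'|$, because the sum of degrees over $V'$ double-counts precisely the edges inside $V'$. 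Substituting $w(v) = \deg(v)$ and $w(u,v) = 1$, the removed crossings minus re-introduced crossings is $\sum_{v\in V'}\deg(v) - 2|E'| = \sum_{v\in V'}w(v) - \sum_{(u,v)\in E'}2\cdot w(u,v)$; with the alternative choice $w(u,v)=2$ this is literally $W(G^\circ[V'])$. Thus the total crossing count is $|E| - W(G^\circ[V'])$, a constant minus the objective, so minimizing crossings in \tsdrawing is the same as maximizing $W$ subject to the degree-$k$ constraint.

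Finally I would handle the "remaining interior crossings" variant mentioned in the text: if we only want to minimize type-(i) crossings (those still visible inside $C$), the count is $|E| - \cro(\exEdges) = |E| - \sum_{v\in V'}\deg(v) + |E'|$, which is minimized by maximizing $\sum_{v\in V'}w(v) - \sum_{(u,v)\in E'}w(u,v)$ with $w(v)=\deg(v)$ and $w(u,v)=1$ — explaining the two weight choices. To conclude, an optimal $V'$ yields, by reading off $\exEdges$ and drawing those edges as non-crossing-except-for-the-forced-crossings curves outside $C$ and the rest as chords, an outer $k$-plane two-sided layout with the minimum possible number of crossings; conversely any optimal such layout gives a feasible $V'$ of the same weight, so no better one exists. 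The main obstacle I anticipate is the topological step in the first paragraph: carefully arguing that the outer $k$-plane realizability of a given $\exEdges$ is governed exactly by the alternation (circle-graph) condition, i.e., that an abstract degree-$k$ subgraph of $G^\circ$ can always be realized as an actual outer $k$-plane drawing in the exterior, and that the number of forced crossings is exactly $|E'|$ with no avoidable extra crossings. Once that equivalence is nailed down, the weight bookkeeping is routine.
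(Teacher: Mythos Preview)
Your proposal is correct and follows essentially the same approach as the paper: identify $\exEdges$ with a vertex subset $V'$ of $G^\circ$, observe that the outer $k$-plane condition is exactly the degree-$k$ bound, and then count crossings to see that minimizing them is equivalent to maximizing $W$. You are more explicit than the paper about the topological justification (exterior curves cross iff their endpoints alternate) and about the arithmetic showing that total crossings equal $|E|-W$; the paper largely takes these for granted, so your version is if anything more careful.
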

		\begin{proof}
			Let $V^* \subset V$ be a vertex set that induces a maximum-weight subgraph of degree at most $k$ in $G^\circ$. 
			Since vertices in $G^\circ$ correspond to edges in $\mathcal G$, we can choose $\mathcal E^* = V^*$ as the set of exterior edges in \tsdrawing.
			Each edge in $G^\circ$ corresponds to a crossing in the one-sided circular layout \drawing. 
			Hence each edge in the induced graph $G^\circ[V^*]$ corresponds to an exterior crossing in \tsdrawing. 
			Since the maximum degree of $G^\circ[V^*]$ is $k$, no exterior edge in \tsdrawing has more than $k$ crossings. 
									
			The degree of a vertex $v \in V^*$ (and thus its weight $w(v)$) equals the number of crossings that are removed from \drawing by drawing the corresponding edge in $\mathcal E^*$ in the exterior part of \tsdrawing. 
			However, if two vertices in $V^*$ are connected by an edge, their corresponding edges in $\mathcal E^*$ necessarily cross in the exterior part of \tsdrawing and we need to add a correction term, otherwise the crossing would be counted twice. 
			So for edge weights $w(u,v)=1$ the weight $W$ maximized by $V^*$ equals the number of crossings that are removed from the interior part of~\tsdrawing. 
			For $w(u,v)=2$, though, the weight $W$ equals the number of crossings that are removed from the interior, but not counting those that are simply shifted to the exterior of~\tsdrawing.
		\end{proof}
		
		Lemma~\ref{lem:ma1indset} tells us that instead of minimizing the crossings in two-sided circular layouts with an outer $k$-plane exterior graph, we can focus on solving the \ksubgraphN{k} problem for circle graphs in the rest of the paper. 

\subsection{Interval representation of circle graphs} %
\label{sub:overlap_graphs}
	There are two alternative representations of circle graphs. 
	The first one is the \emph{chord representation} as a set of chords of a circle (i.e., a one-sided circular layout), whose intersection graph actually serves as the very definition of circle graphs. 
	The second and less immediate representation is the \emph{interval representation} as an \emph{overlap graph}, which is more convenient for describing our algorithms. 
	In an interval representation each vertex is represented as a closed interval $I \subset \mathbb R$.
	Two vertices are adjacent if and only if the two corresponding intervals partially overlap, i.e., they intersect but neither interval contains the other.

	\begin{figure}[htb]
		\includegraphics{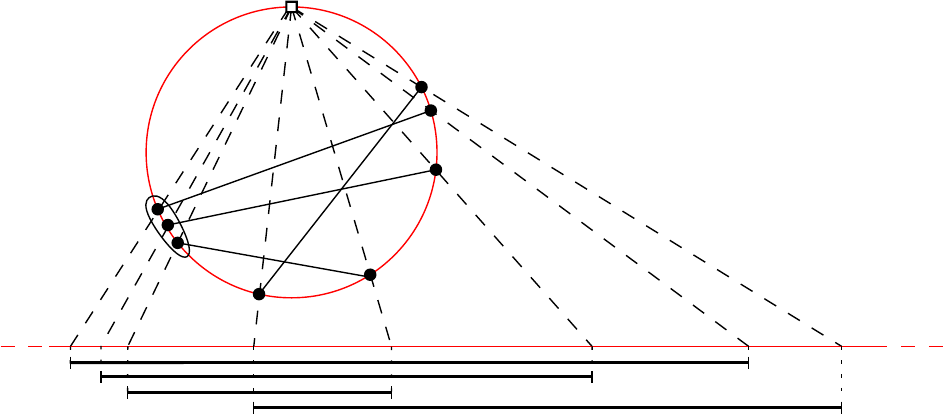}
		\centering
		\caption{An example projection of the chord representation of a circle graph (here: $K_{1,3}$) to obtain an interval representation of the same graph as an overlap graph. Marked groups of endpoints indicate how chords incident to the same vertex are separated before the projection.}
		\label{fig:proj_exmp}
	\end{figure}

	Gavril~\cite{gavril1973algorithms} showed that circle graphs and overlap graphs represent the same class of graphs. 
	To obtain an interval representation from a chord representation $\drawing$ on a circle $C$ the idea is to pick a point $p$ on $C$, which is not the endpoint of a chord, rotate $\Gamma$ such that $p$ is the topmost point of $\Gamma$ and project the chords from $p$ onto the real line below $C$, see Fig.~\ref{fig:proj_exmp}.  
	Each chord is then represented as a finite interval and two chords intersect if and only if their projected intervals partially overlap.
	We can further assume that all endpoints of the projected intervals are distinct by locally separating chords with a shared endpoint in $\drawing$ before the projection, such that the intersection graph of the chords does not change. %

	\section{\NP-hardness}\label{sec:hard}
		For arbitrary, non-constant $ k \in \mathbb{N} $ we show that \ksubgraphN{k} is \NP-hard, even on circle graphs. %
		Our reduction is from the \textsc{Minimum Dominating Set} problem, which is \NP-hard on circle graphs~\cite{keil1993complexity}.
		
		\begin{problem}[\textsc{Minimum Dominating Set}]
			Given a graph $ G = (V,E) $, find a set $ V' \subseteq V $ of minimum cardinality such that for each $u \in V \setminus V'$ there is a vertex $v \in V'$ with $(u,v) \in E$.
		\end{problem}

		\begin{theorem}
			\label{thm:hard}
			\ksubgraphN{k} is \NP-hard on circle graphs, even if 
			all vertex weights are one and all edge weights are zero.
		\end{theorem}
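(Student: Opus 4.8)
The plan is to reduce from \textsc{Minimum Dominating Set} on circle graphs, which is \NP-hard by Keil~\cite{keil1993complexity}. Given a circle graph $G=(V,E)$ with $n=|V|$ and maximum degree $\Delta=\Delta(G)$, I first dispense with the case $\Delta\le 1$: then $G$ is a disjoint union of edges and isolated vertices and $\gamma(G)$ is trivially computable, so assume $\Delta\ge 2$. Now build a graph $G^{+}$ by attaching to every $v\in V$ a set of $\Delta-\deg_G(v)$ fresh pendant vertices, so that each original vertex has degree exactly $\Delta$ in $G^{+}$. A pendant of $v$ is realised by adding to a chord representation of $G$ an arbitrarily short chord placed next to an endpoint of the chord of $v$ and crossing only that chord; choosing these short chords pairwise non-crossing shows that $G^{+}$ is again a circle graph, and a chord representation of $G^{+}$ is obtained in polynomial time (a chord representation of $G$ itself can be computed in polynomial time if it is not supplied). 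The \ksubgraphN{k} instance is $G^{+}$ with $k:=\Delta-1\ge 1$, all vertex weights equal to $1$, and all edge weights equal to $0$; then $W(G^{+}[V'])=|V'|$, so the problem asks for a largest set $V'\subseteq V(G^{+})$ with $\Delta(G^{+}[V'])\le k$.

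The core of the proof is the identity
\[
\mathrm{OPT} \;=\; |V(G^{+})|-\gamma(G) \;=\; \bigl(n+n\Delta-2|E|\bigr)-\gamma(G),
\]
where $\mathrm{OPT}$ is the optimum of the constructed instance. For the lower bound, take a minimum dominating set $D$ of $G$ and let $V'$ consist of all pendant vertices together with $V\setminus D$: every surviving original vertex $v$ has a neighbour in $D$, hence degree at most $(\deg_G(v)-1)+(\Delta-\deg_G(v))=k$ in $G^{+}[V']$, and every pendant has degree at most $1\le k$, so $V'$ is feasible with $|V'|=(n-\gamma(G))+(n\Delta-2|E|)$. For the upper bound, let $V'$ be any feasible set, put $D:=V\setminus(V'\cap V)$, and let $P'$ be the deleted pendant vertices. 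For $v\in V\setminus D$ the degree constraint forces the number $p_v$ of deleted pendants of $v$ to satisfy $p_v\ge |N_G(v)\setminus D|-\deg_G(v)+1$, which is at least $1$ exactly when $N_G(v)\cap D=\emptyset$; summing the $p_v$ yields $|P'|\ge u(D)$, where $u(D)$ counts the vertices of $V\setminus D$ having no neighbour in $D$. Since $D$ together with those $u(D)$ vertices forms a dominating set of $G$, we get $|D|+|P'|\ge|D|+u(D)\ge\gamma(G)$, hence $|V'|=|V(G^{+})|-(|D|+|P'|)\le|V(G^{+})|-\gamma(G)$.

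Combining the two bounds gives the identity, so $\gamma(G)=n+n\Delta-2|E|-\mathrm{OPT}$ is determined by any algorithm solving \ksubgraphN{k} on circle graphs with unit vertex and zero edge weights, which proves \NP-hardness. I expect the upper bound to be the delicate step: one must rule out solutions that retain many original vertices at the price of only a few pendants, and the reason this cannot help is that deleting a pendant of $v$ costs exactly as much as deleting $v$ or a genuine neighbour of $v$, formalised by the observation that a set $D$ together with the vertices it leaves undominated is itself dominating, which pins the trade-off down to exactly $\gamma(G)$. The only other point needing care is verifying that $G^{+}$ stays within the class of circle graphs, which the short-chord realisation of pendant vertices handles.
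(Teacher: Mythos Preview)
Your proof is correct and follows essentially the same reduction as the paper: pad a circle graph with pendant leaves so that every original vertex has degree exactly $k+1$, and observe that optimal solutions to \ksubgraphN{k} correspond to complements of minimum dominating sets. The only cosmetic differences are that you take $k=\Delta-1$ while the paper takes $k=\Delta$, and that for the converse direction you use a direct counting argument (bounding $|D|+|P'|\ge\gamma(G)$ via the fact that $D$ together with its undominated vertices is itself dominating) whereas the paper argues by an exchange that an optimal $V_s$ may be assumed to contain all leaves; both arguments are equally short.
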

		\begin{proof}
			Given an instance of \textsc{Minimum Dominating Set} on a circle graph $ G = (V,E) $ we construct an instance of \ksubgraphN{k}. 
			First let $ G' = (V', E') $ be a copy of $ G $. 
			We set the degree bound $ k $ equal to the maximum degree of $ G $ and attach new leaves to each vertex $ v \in V' $ until every (non-leaf) vertex in $ G' $ has degree $ k + 1 $. 
			Note that $G'$ remains a circle graph when adding leaves. 
			We set the weights to $ w(v) = 1 $ for $ v \in V' $ and $ w(u,v) = 0 $ for $ (u,v) \in E' $. 
			This implies that the weight $W$ to be maximized is just the number of vertices in the induced subgraph.
			
			Now given a minimum dominating set $ V_d \subseteq V $ of $ G $,  we know that for every vertex $ v \in V $ either $ v \in V_d $ or there exists a vertex $ u \in V_d $ such that $ (u,v) \in E $. 
			This means if we set $ V_s = V' \setminus V_d $ the graph $ G'[V_s] $ has maximum degree $ k $, since for every $ v \in V_s $ at least one neighbor is in $V_d$ and the maximum degree in $G'$ is $k+1$. 
			Since $ V_d $ is a minimum dominating set,  $ V_s $, for which we can assume that it contains all leaves, is the largest set of vertices such that $ G'[V_s] $ has maximum degree $ k $. 
			Hence $ V_s $ is a solution to the \ksubgraphN{k} problem on~$ G' $.
			
			Conversely let $ V_s \subseteq V' $ be a solution to the \ksubgraphN{k} problem on $ G' $.
			Again we can assume that $V_s$ contains all leaves of $G'$. 
			Otherwise let $u \in V'\setminus V_s$ be a leaf of $G'$ with unique neighbor $v \in V'$. 
			The only possible reason that $u \not\in V_s$ is that $v \in V_s$ and the degree of $v$ in $G'[V_s]$ is $\deg(v) = k$. 
			If we replace in $V_s$ a non-leaf neighbor $w$ of $v$ (which must exist) by the leaf $u$, the resulting set has the same cardinality and satisfies the degree constraint.			
			 Now let $ V_d = V' \setminus V_s $. 
			 By our assumption $V_d$ contains no leaves of $G'$ and $V_d \subseteq V$. 
			 Since every vertex in $ G'[V_s] $ has degree at most $ k $ we know that each $v \in V \setminus V_d $ must have a neighbor $u \in V_d$, otherwise it would have degree $k+1$ in $G'[V_s]$.  
			 Thus $V_d$ is a dominating set.
			 Further, $V_d$ is a minimum dominating set. 
			 If there was a smaller dominating set $V_d'$ in $G$ then $V' \setminus V_d'$ would be a larger solution than $V_s$ for the \ksubgraphN{k} problem on $ G' $, which is a contradiction.
		\end{proof}

\section{Algorithms for \ksubgraphN{k} on circle graphs }\label{sec:algorithms}
		Before describing our dynamic programming algorithms for $k=1$ and the generalization to $k \ge 2$ in this section, we introduce the necessary basic definitions and notation using the interval perspective on \ksubgraphN{k} for circle graphs.

		\subsection{Notation and Definitions} %
		\label{sub:notation_and_definitions}
		Let $G=(V,E)$ be a circle graph and $\intervalset = \{I_1, \dots, I_n\}$ an interval representation of $G$ with $n$ intervals that have $2n$ distinct endpoints as defined in Section~\ref{sub:overlap_graphs}.
		Let $\sigma(\intervalset) = \{\sigma_1, \dots, \sigma_{2n}\}$ be the set of all interval endpoints and assume that they are sorted in increasing order, i.e., $\sigma_i < \sigma_j$ for all $i < j$.
		We may in fact assume without loss of generality that $\sigma(\intervalset) = \{1, \dots, 2n\}$ by mapping each interval $[\sigma_l,\sigma_r] \in \intervalset$ to the interval $[l,r]$ defined by its index ranks.
		Clearly the order of the endpoints of two intervals $[\sigma_l,\sigma_r]$ and $[\sigma_{l'},\sigma_{r'}]$ is exactly the same as the order of the endpoint of the intervals $[l,r]$ and $[l',r']$ and thus the overlap or circle graph defined by the new interval set is exactly the same as the one defined by $\intervalset$.
		
		For two distinct intervals $\iI=[a,b]$ and $ \iJ=[c,d] \in \intervalset$ we say that $\iI$ and $\iJ$ \emph{overlap} if $a<c<b<d$ or $c<a<d<b$.
		Two overlapping intervals correspond to an edge in $G$.
		For an interval $I \in \intervalset$ and a subset $\mathcal I' \subseteq \intervalset$ we define the overlap set $\overlap(I, \intervalset') = \{J \mid J \in \intervalset' \text{ and } \iI,\iJ \text{ overlap} \}$. 
		Further, for $\iI=[a,b]$, we define the forward overlap set $ \overrightarrow{\overlap}(\iI,\intervalset') =  \{J \mid J=[c,d] \in \overlap(I, \intervalset') \text{ and } \sJ < \eI < \eJ \}$ of intervals overlapping on the right side of $I$ and the set $\overlap(\intervalset') = \{\{\iI,\iJ\} \mid \iI, \iJ \in \intervalset' \text{ and } \iJ \in \overlap(I, \intervalset')\}$ of all overlapping pairs of intervals in $\intervalset'$.
		If $\iJ \subset \iI$, i.e., $a<c<d<b$, we say that $\iI$ \emph{nests} $\iJ$ (or $\iJ$ is \emph{nested} in $\iI$).
		Nested intervals do not correspond to edges in $G$.
		For a subset $\mathcal I' \subseteq \intervalset$ we define the set of all intervals nested in $I$ as $\mathcal{N}(I,\intervalset') = \{J \mid J \in \intervalset' \text{ and } \iJ \text{ is nested in } \iI\}$.

		Let $ \intervalset' \subseteq \intervalset $ be a set of $ n' $ intervals. 
		We say $ \intervalset' $ is \emph{connected} if its corresponding overlap or circle graph is connected. 
		Further let $ \sigma(\intervalset') = \{i_1, \dots, i_{2n'} \}$ be the sorted interval endpoints of $ \intervalset' $. 
		The \emph{span} of $ \intervalset' $ is defined as $ \Ispan(\intervalset') = i_{2n'} - i_1 $ and the \emph{fit} of the set $ \intervalset' $ is defined as $ \fit(\intervalset') = \max\{i_{j+1} - i_j \mid 1 \leq j < 2n' \} $. %

		For a weighted circle graph $G=(V,E)$ with interval representation $\intervalset$ we can immediately assign each vertex weight $w(v)$ as an interval weight $w(I_v)$ to the interval $I_v \in \intervalset$ representing $v$ and each edge weight $w(u,v)$ to the overlapping pair of intervals $\{I_u,I_v\} \in \overlap(\intervalset)$ that represents the edge $(u,v) \in E$. 
		We can now phrase the \ksubgraphN{k} problem for a circle graph in terms of its interval representation, i.e., given an interval representation $\intervalset$ of a circle graph $G$, find a subset $\intervalset' \subseteq \intervalset$ such that no $\iI \in \intervalset'$ overlaps more than $k$ other intervals in $\intervalset'$ and such that the weight $W(\intervalset') = \sum_{\iI \in \intervalset'} \weight{\iI} - \sum_{\{\iI,\iJ\} \in \mathcal{P}(\intervalset')} \weight{\iI,\iJ}$ is maximized. 
		We call such an optimal subset $\intervalset'$ of intervals a \emph{max-weight $k$-overlap set}.

		\subsection{Properties of max-weight 1-overlap sets}\label{sub:1overlap}
		
		The basic idea for our dynamic programming algorithm is to decompose any \emph{$1$-overlap set}, i.e., a set of intervals, in which no interval overlaps more than one other interval, into a sequence of independent single intervals and overlapping interval pairs. 
		Consequently, we can find a max-weight 1-overlap set by optimizing over all possible ways to select a single interval or an overlapping interval pair and recursively solving the induced independent subinstances that are obtained by splitting the instance according to the selected interval(s).

		Let $\intervalset$ be a set of intervals.
		For $ x, y \in \mathbb{R} \cup \{\pm \infty\} $ with $x \le y$ we define the set $ \intervalset[x,y] = \{\iI \in \intervalset \mid \iI \subseteq [x,y]\}$ as the subset of $\intervalset$ contained in $[x,y]$. 
		Note that $\intervalset[-\infty,\infty] = \intervalset$.
			For any $\iI  = [a,b]  \in \intervalset[x,y]$ we can \emph{split} $\intervalset[x,y]$ \emph{along}  $ \iI $ into the three sets 
			$ \intervalset[x, \sI]$, $\intervalset[\sI,\eI]$, $\intervalset[\eI, y]$.
			This split corresponds to selecting $\iI$ as an interval without overlaps in a candidate 1-overlap set.
			All intervals which are not contained in one of the three sets will be discarded after the split.

			Similarly, we can split  $\intervalset[x,y]$ %
			along a pair of overlapping intervals $ \iI=[\sI,\eI], \iJ=[\sJ,\eJ] \in \intervalset $ to be included in candidate solution. 
			Without loss of generality let $ \sI < \sJ < \eI < \eJ $. 
			Then the split creates the five sets
			$\intervalset[x, \sI]$ , $\intervalset[\sI,\sJ]$,  $\intervalset[\sJ, \eI]$, $\intervalset[\eI, \eJ]$, $\intervalset[\eJ, y]$, see Fig.~\ref{fig:fivecases}.  
			Again, all intervals which are not contained in one of the five sets are discarded. 
			The next lemma shows that none of the discarded overlapping intervals can be included in a 1-overlap set together with $\iI$ and $\iJ$.

			\begin{figure}[tbp]
				\centering
				\includegraphics[width=\textwidth]{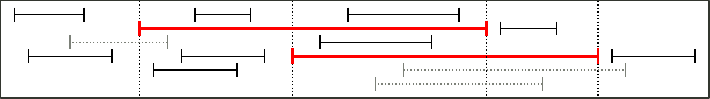}
				\caption{Split along the two thick red intervals. The dotted intervals are discarded and we recurse on the five sets with black intervals.}
				\label{fig:fivecases}
			\end{figure}

			\begin{lemma}
				\label{lem:picktwo}
				For any $ x \in \mathbb{R} $ at most two overlapping intervals $ \iI = [a,b],\iJ = [c,d] \in \intervalset $ with  $ a \leq x \leq b $ and $ c \leq x \leq d $ can be part of a $1$-overlap set of \intervalset.
			\end{lemma}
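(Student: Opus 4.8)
The plan is to argue by contradiction, using the single structural fact that a $1$-overlap set cannot contain three pairwise overlapping intervals, because any three such intervals would form a triangle in the overlap graph. First I would record the two elementary observations about intervals through a common point: (i) two distinct intervals that both contain a point $x$ cannot be disjoint, and by the dichotomy that two intersecting intervals are either nested or properly overlapping, any two distinct intervals containing $x$ are \emph{either} nested \emph{or} overlapping; (ii) consequently, among the intervals of a candidate $1$-overlap set $\intervalset'$ that contain $x$, the pairwise relations are governed purely by nesting versus overlapping.

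Now suppose, for contradiction, that three distinct intervals $I=[a,b]$, $J=[c,d]$, $K=[e,f]$ of a $1$-overlap set $\intervalset'$ all contain $x$ (that is, $a\le x\le b$, $c\le x\le d$, $e\le x\le f$) and are pairwise overlapping. Then the vertices of the overlap graph of $\intervalset'$ corresponding to $I$, $J$, $K$ form a triangle; in particular $I$ overlaps both $J$ and $K$. This contradicts $\intervalset'$ being a $1$-overlap set, where every interval overlaps at most one other interval of $\intervalset'$. Hence at most two pairwise overlapping intervals of $\intervalset'$ can contain $x$. By observation (i) this is exactly the bound in the lemma: the intervals through $x$ that are not nested inside another interval through $x$ are pairwise overlapping, hence at most two, and every further interval of $\intervalset'$ through $x$ is nested inside one of these (at most two) ``topmost'' intervals $I$, $J$ — which is precisely the situation exploited when $\intervalset[x,y]$ is split along an overlapping pair.

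I do not expect a genuine obstacle here; the whole lemma rests on triangle-freeness of $1$-overlap sets. The only step needing a moment's care is the reduction from ``$I,J,K$ contain $x$ and are pairwise non-nested'' to ``$I,J,K$ pairwise overlap'', which is immediate from observation (i). If one prefers to match the statement's phrasing even more literally, one can phrase it as: once two overlapping intervals $I,J$ both containing $x$ are committed to $\intervalset'$, a third interval $K$ through $x$ would have to either overlap $I$ or $J$ (giving one of them degree two, impossible) or nest both of them (hence not count as a further overlapping interval), so no third overlapping interval through $x$ is possible.
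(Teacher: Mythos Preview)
Your proposal is correct, and the final ``alternative phrasing'' you give is exactly the paper's proof: once $I$ and $J$ overlap and both contain $x$, a third interval $K$ through $x$ that overlaps $I$ or $J$ forces one of $I,J$ to have two overlaps, contradicting the $1$-overlap condition.

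One small point of presentation: your initial triangle argument assumes $I,J,K$ are \emph{pairwise} overlapping, which is a stronger hypothesis than what the lemma (and its applications) actually rules out. For instance, with $I=[1,5]$, $J=[3,7]$, $K=[2,4]$, $x=3.5$, the three intervals all contain $x$, $K$ is nested in $I$, yet $K$ overlaps $J$, so $J$ has degree two---this is excluded by the lemma's intended use but not by a pairwise-overlap triangle argument. Your alternative at the end (and the paper's proof) handles this correctly since it only requires $K$ to overlap \emph{one} of $I,J$. So the alternative is not just a rephrasing but the actual argument; the triangle version and the nesting/topmost discussion are fine as intuition but not needed.
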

			\begin{proof}
			Assume there is a third interval $ \iK = [e,f] \in \intervalset $ with $ e \leq x \leq f $ in a $1$-overlap set, which  overlaps \iI or \iJ or both. Interval \iK cannot be added to the 1-overlap set without creating at least one interval that overlaps two other intervals, which is not allowed in a 1-overlap set.
			\end{proof}

			Our algorithm for the \kiss{1} problem extends some of the ideas of the algorithm presented by Valiente for the independent set problem in circle graphs~\cite{valiente2003new}. 
			In our analysis we use Valiente's notion of \emph{total chord length}, where the chord length is the same as the \emph{length} $ \length(\iI) = j - i $ of the corresponding interval $\iI=[i,j] \in \intervalset$. 
			The \emph{total interval length} can then be defined as 
			$\ell = \ell(\intervalset) = \sum_{\iI \in \intervalset} \length(\iI)$. 
			We use the following bound in our analysis.
			\begin{lemma}
				\label{lem:length}
				Let \intervalset be a set of intervals and $ \gamma $ be the maximum degree of the corresponding overlap or circle graph, then 
				$\sum_{\iI\in\intervalset}\sum_{\iJ \in \mathcal{P}(\iI,\intervalset)} (\length(\iI) + \length(\iJ)) = O(\gamma\ell)$. 
			\end{lemma}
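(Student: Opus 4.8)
The plan is to split the double sum by linearity into the two pieces
\[
  \sum_{\iI\in\intervalset}\sum_{\iJ \in \overlap(\iI,\intervalset)} \length(\iI)
  \qquad\text{and}\qquad
  \sum_{\iI\in\intervalset}\sum_{\iJ \in \overlap(\iI,\intervalset)} \length(\iJ),
\]
and to bound each of them separately by $\gamma\ell$, where $\ell = \sum_{\iI\in\intervalset}\length(\iI)$ is the total interval length. Summing the two bounds then gives $2\gamma\ell = O(\gamma\ell)$.

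For the first piece I would use that, by definition, $\overlap(\iI,\intervalset)$ is exactly the neighborhood of $\iI$ in the overlap (circle) graph, so $|\overlap(\iI,\intervalset)| = \deg(\iI) \le \gamma$ for every $\iI\in\intervalset$. Hence the inner sum equals $\length(\iI)\cdot|\overlap(\iI,\intervalset)| \le \gamma\,\length(\iI)$, and summing over all $\iI\in\intervalset$ yields at most $\gamma\ell$.

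For the second piece I would exploit the symmetry of the overlap relation: $\iJ\in\overlap(\iI,\intervalset)$ if and only if $\iI\in\overlap(\iJ,\intervalset)$. Swapping the order of summation therefore gives
\[
  \sum_{\iI\in\intervalset}\sum_{\iJ \in \overlap(\iI,\intervalset)} \length(\iJ)
  = \sum_{\iJ\in\intervalset}\length(\iJ)\cdot\bigl|\{\iI\in\intervalset : \iJ\in\overlap(\iI,\intervalset)\}\bigr|
  = \sum_{\iJ\in\intervalset}\length(\iJ)\cdot\deg(\iJ),
\]
which is again at most $\gamma\ell$ by the same degree bound. Combining the two estimates completes the proof.

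There is essentially no obstacle here; the only point requiring a moment's care is the re-indexing of the second piece, namely observing that summing $\length(\iJ)$ over all ordered overlapping pairs charges each interval's length once per incident edge, so that it, too, is controlled by the maximum degree $\gamma$.
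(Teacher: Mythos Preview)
Your proof is correct and follows essentially the same approach as the paper: the paper also uses the symmetry $\iJ\in\overlap(\iI,\intervalset)\iff\iI\in\overlap(\iJ,\intervalset)$ together with the degree bound to conclude that each interval's length is counted at most $2\gamma$ times in the double sum. Your version is simply a more explicit unpacking of that same counting argument.
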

			\begin{proof}
				We first observe that $ \iJ \in \mathcal{P}(\iI,\intervalset) $ if and only if $ \iI \in \mathcal{P}(\iJ,\intervalset) $. So in total each interval in $\mathcal I$ appears at most $\gamma$ times as $\iI$ and at most $\gamma$ times as $ \iJ $ in the double sum, i.e., no interval in $\mathcal I$ appears more than $2\gamma$ times and the bound follows.
			\end{proof}

		\subsection{An algorithm for max-weight 1-overlap sets}
			\label{sec:algo}
			
			Our algorithm to compute max-weight 1-overlap sets runs in two phases. 
			In the first phase, we compute the weights of optimal solutions on subinstances of increasing size by recursively re-using solutions of smaller subinstances.  
			In the second phase we optimize over all ways of combining optimal subsolutions to obtain a max-weight 1-overlap set.

			The subinstances of interest are defined as follows. 
			Let $\intervalset' \subseteq \intervalset$ be a connected set of intervals and let $l=l(\intervalset')$ and $r=r(\intervalset')$ be the leftmost and rightmost endpoints of all intervals in $\intervalset'$.
			We define the value $\dms{1}(\intervalset')$ as the maximum weight of a 1-overlap set on $\intervalset[l,r]$ that includes $\intervalset'$ in the 1-overlap set (if one exists).
			Lemma~\ref{lem:picktwo} implies that it is sufficient to compute the $ \dms{1} $ values for single intervals $ \iI \in \intervalset $ and overlapping pairs $ \iI,\iJ \in \intervalset $ since any connected set of three or more intervals cannot be a 1-overlap set any more.

			We start with the computation of $ \dms{1}(\iI) $ for a single interval $\iI = [a,b] \in \intervalset $. 
			Using a recursive computation scheme of $\dms{1}$ that uses increasing interval lengths we may assume by induction that for any interval $ \iJ \in \intervalset $ with $ \length(\iJ) < \length(\iI) $ and any overlapping pair of intervals $ \iJ,\iK \in \intervalset $ with  $ \Ispan(\iJ,\iK) < \length(\iI) $ the sets $ \dms{1}(\iJ) $ and $ \dms{1}(\iJ,\iK) $ are already computed. 
			If we select $\iI$ for the 1-overlap set as a single interval without overlaps, we need to consider for $\dms{1}(\iI)$ only those intervals nested in $\iI$. Refer to Fig.~\ref{fig:rec} for an illustration.
			The value of $ \dms{1}(I) $ is determined using an auxiliary recurrence $ S_I[x]$ for $ a \le x \le b $ and the weight $w(\iI)$ resulting from the choice of $\iI$:
			\begin{align}\label{rec:I}
			\dms{1}([a,b]) = S_I[a+1] + w(\iI).
			\end{align}
			For a fixed interval $\iI= [a,b]$ the value $S_I[x]$ represents the weight of an optimal solution of $\intervalset[x,b]$. To simplify the definition of recurrence $ S_I[x] $ we define the set $ D_S([c,d],\intervalset[a,b]) $ with $ [c,d] \in \intervalset[a,b] $, in which we collect all \dms{1} values for pairs composed of $[c,d]$ and an interval in $\overrightarrow{\mathcal P}([c,d],\intervalset[a,b])$ (see Fig.~\ref{fig:rec}(c)) as
			\begin{align}
				\label{eq:DS}
				D_S([c,d],\intervalset[a,b]) = 
				\{ \dms{1}([c,d],[e,f]) + S_I[f + 1] \mid [e,f] \in \overrightarrow{\mathcal{P}}([c,d],\intervalset[a,b])\}.
			\end{align}
			The main idea of the definition  of $ S_I[x] $ is a maximization step over the already computed sub-solutions that may be composed to an optimal solution for $\intervalset[x,b]$. To stop the recurrence we set $ S_I[b] = 0 $ and for every end-point $ d $ of an interval $ [c,d] \in \intervalset[a,b] $ we set $ S_I[d] = S_I[d+1] $. It remains to define the recurrence for the start-point $c$ of each interval $[c,d]  \in \intervalset[a,b] $:
			\begin{align}
				\label{rec:S}
				S_I[c] = 	\max \{	\{S_I[c+1], \dms{1}([c,d]) + S_I[d + 1]\} \cup D_S([c,d],\intervalset[a,b]) \}.
			\end{align}
			\begin{figure}[t]
				\centering
				\includegraphics[width=\textwidth]{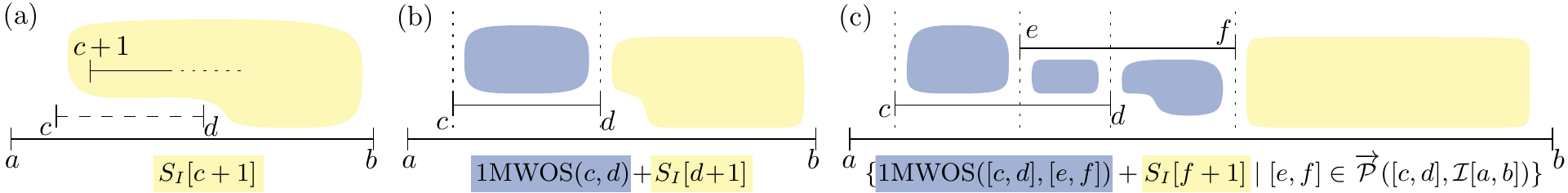}
				\caption{Illustration of Recurrence (\ref{rec:S}). The dashed intervals are discarded, while solid ones are considered in the solution.}
				\label{fig:rec}
			\end{figure}
			Figure~\ref{fig:rec} depicts which of the possible configurations of selected intervals is represented by which values in the maximization step of Recurrence~\ref{rec:S}. The first option (Fig.~\ref{fig:rec}(a)) is to discard the interval $[c,d]$, the second option (Fig.~\ref{fig:rec}(b)) is to select $[c,d]$ as a single interval, and the third option (Fig.~\ref{fig:rec}(c)) is to select $[c,d]$ and an interval in its forward overlap set.
			\begin{lemma}
				\label{lem:dmsI}
				Let $ \intervalset $ be a set of intervals and $ \iI \in \intervalset $, then the  value $ \dms{1}(\iI) $ can be computed in $ O(\gamma\length(\iI)) $ time assuming all $ \dms{1}(\iJ) $ and $ \dms{1}(\iJ,\iK) $ values are computed for $ \iJ,\iK \in \intervalset $, $ \length(\iJ) < \length(\iI) $ and $ \Ispan(\iJ,\iK) < \length(\iI) $.
			\end{lemma}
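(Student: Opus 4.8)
The plan is to split the proof into a correctness part for Recurrences~(\ref{rec:I}) and~(\ref{rec:S}) and a separate running-time analysis. For correctness, I would first observe that, since the $2n$ interval endpoints are distinct, every interval of $\intervalset[a,b]$ other than $\iI=[a,b]$ itself has both of its endpoints strictly between $a$ and $b$ and is therefore nested in $\iI$; hence $\intervalset[a+1,b]=\intervalset[a,b]\setminus\{\iI\}$ and no interval of $\intervalset[a+1,b]$ overlaps $\iI$. Consequently a max-weight $1$-overlap set of $\intervalset[a,b]$ that contains $\iI$ is exactly $\{\iI\}$ together with a max-weight $1$-overlap set of $\intervalset[a+1,b]$, which yields~(\ref{rec:I}) provided $S_I[a+1]$ is the weight of the latter.

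To establish this, I would prove by reverse induction on $x$ (from $b$ down to $a+1$) that $S_I[x]$ equals the weight of a max-weight $1$-overlap set of $\intervalset[x,b]$. The base case $S_I[b]=0$ is immediate, as is the identity $S_I[x]=S_I[x+1]$ at every position $x$ (in particular every interval endpoint) that is not the left endpoint of an interval of $\intervalset[a,b]$, since no interval of $\intervalset[a,b]$ opens there. For the left endpoint $c$ of an interval $[c,d]\in\intervalset[a,b]$, I would take a max-weight $1$-overlap set $\mathcal S$ of $\intervalset[c,b]$ and use that no interval of $\intervalset[c,b]$ starts to the left of $c$ to place $\mathcal S$ in one of the three cases of Fig.~\ref{fig:rec}: $[c,d]\notin\mathcal S$; $[c,d]\in\mathcal S$ without an overlap partner in $\mathcal S$; or $[c,d]\in\mathcal S$ together with a unique partner $[e,f]$, which then satisfies $c<e<d<f$ and $[e,f]\in\overrightarrow{\mathcal{P}}([c,d],\intervalset[a,b])$. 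In each case I would show that the remaining intervals of $\mathcal S$ decompose into independent parts confined to the gap regions of the associated split (Fig.~\ref{fig:fivecases}), so that their total weight is at most $S_I[c+1]$, resp.\ $\dms{1}([c,d])+S_I[d+1]$, resp.\ $\dms{1}([c,d],[e,f])+S_I[f+1]$, invoking the induction hypothesis together with the earlier-computed values, which are available because $\length([c,d])<\length(\iI)$ and $\Ispan([c,d],[e,f])<\length(\iI)$. For the reverse inequality, each term of the maximization in~(\ref{rec:S}) is realized by a concrete feasible $1$-overlap set, so $S_I[c]$ equals the optimum, and with~(\ref{rec:I}) this gives $\dms{1}(\iI)=S_I[a+1]+w(\iI)$.

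The step I expect to be the main obstacle is the claim that the gap regions of a split are genuinely independent, i.e., that no interval of $\mathcal S$ straddles a split point and that intervals lying in different regions neither overlap nor should be counted with the wrong sub-solution. For a split along a single interval this is straightforward; for a split along an overlapping pair $[c,d],[e,f]$ I would argue that an interval straddling any of the split points $c,e,d,f$ necessarily overlaps $[c,d]$ or $[e,f]$, giving one of them a forbidden second partner (this is exactly where Lemma~\ref{lem:picktwo} is used), whereas two intervals lying in distinct gap regions are disjoint because consecutive regions meet in only a single point, which is an endpoint of $[c,d]$ or $[e,f]$ and hence the endpoint of no other interval.

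For the running time, I would charge the work to the $\length(\iI)+1$ integer positions of $[a,b]$ together with the local overlap structure. Evaluating $S_I[x]$ at a position $x$ that is not the left endpoint of an interval of $\intervalset[a,b]$ costs $O(1)$. At the left endpoint $c$ of an interval $[c,d]\in\intervalset[a,b]$, forming $D_S([c,d],\intervalset[a,b])$ as in~(\ref{eq:DS}) and taking the maximum in~(\ref{rec:S}) costs $O(1+|\overrightarrow{\mathcal{P}}([c,d],\intervalset[a,b])|)$ time, assuming all required $\dms{1}$ values are stored in a table with $O(1)$ lookup and the overlap partners of every interval are precomputed. Since each interval of $\intervalset[a,b]$ has at most $\gamma$ overlap partners in $\intervalset$, summing over all $[c,d]\in\intervalset[a,b]$ gives $O(\gamma\cdot|\intervalset[a,b]|)$, and $|\intervalset[a,b]|=O(\length(\iI))$ because the two distinct endpoints of each nested interval lie among the $\length(\iI)-1$ integers strictly inside $[a,b]$. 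Together with the $O(\length(\iI))$ spent on the pass-through positions, the total is $O(\length(\iI)+\gamma\,\length(\iI))=O(\gamma\,\length(\iI))$.
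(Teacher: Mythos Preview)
Your proposal is correct and follows essentially the same strategy as the paper: both establish Recurrence~(\ref{rec:I}) by arguing that $S_I$ correctly computes the optimum on nested subinstances, split into the same three cases (discard $[c,d]$, select it as singleton, select it together with a forward overlap partner), and both invoke Lemma~\ref{lem:picktwo} to justify that these cases are exhaustive. Your reverse induction on the position $x$ is a slightly cleaner packaging than the paper's induction on $|\mathcal{N}(\iI,\intervalset)|$, and you make explicit the independence of the gap regions after a split, which the paper leaves implicit; the running-time arguments coincide.
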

			\begin{proof}
				Recurrence (\ref{rec:I}) is correct if $ S[a+1] $ is exactly the weight of a max-weight 1-overlap set on the set $ \mathcal{N}(\iI,\intervalset) $, the set of nested intervals of $ \iI $. The proof is by induction over the number of intervals in $ \mathcal{N}(\iI,\intervalset) $. In case $  \mathcal{N}(\iI,\intervalset) $ is empty $ b = a+1 $ and  with $ S[b]=0 $ Recurrence (\ref{rec:I}) is correct. 

				Now let $ \mathcal{N}(\iI,\intervalset) $ consist of one or more intervals. By Lemma~\ref{lem:picktwo} there can only be three cases of how an interval $ \iJ \in \mathcal{N}(\iI,\intervalset) $ contributes. We can decide to discard $\iJ$, to add it as a singleton interval which allows us to split $ \mathcal{N}(\iI,\intervalset) $ along $\iJ$ or to add an overlapping pair $ \iJ, \iK \in \mathcal{N}(\iI,\intervalset) $ such that $ \iK \in \overrightarrow{\mathcal{P}}(\iJ,\intervalset[a,b]) $ and split $\mathcal{N}(\iI,\intervalset)$ along $\iJ,\iK$.
				
				For the start-point $c$ of an interval $\iJ=[c,d]$ the maximization in the definition of $ S $ in Recurrence~(\ref{rec:S}) exactly considers these three possibilities (recall Fig.~\ref{fig:rec}). For all end-points aside from $ b $ we simply use the value of the next start-point or $ S[b]=0 $ which ends the recurrence. Since all $ \dms{1}(\iJ) $ and $ \dms{1}(\iJ,\iK) $ are computed for $ \iJ,\iK \in \intervalset $, $ \length(\iJ) < \length(\iI) $ and $ \Ispan(\iJ,\iK) < \length(\iI) $ the auxiliary table $S$ is computed in one iteration across $ \sigma(\intervalset[a,b]) $. 
				
				The overall running time is dominated by traversing the $ D_S $ sets, which contain at most $ \gamma $ values. This has to be done for every start-point of an interval in $ \mathcal{N}(\iI, \intervalset) $ which leads to an overall computation  time of $ O(\gamma\length(\iI)) $ for $ \dms{1}(\iI) $.
			\end{proof}
			
			Until now we only considered computing the \dms{1} value of a single interval, but we still need to compute \dms{1} for pairs of overlapping intervals. 
			Let $ \iI = [c,d], \iJ = [e,f] \in \intervalset $ be two intervals such that $ \iJ \in \overrightarrow{\mathcal{P}}(\iI,\intervalset) $. 
			If we split \intervalset along these two intervals we find three independent regions (recall Fig.~\ref{fig:rec}(c)) and 
			obtain 
			\begin{align}\label{rec:IJ}
				\dms{1}(\iI,\iJ) = L_{I,J}[c+1] + M_{I,J}[e+1] + R_{I,J}[d+1] + w(\iI) + w(\iJ) - w(\iI,\iJ).
			\end{align}
			The auxiliary recurrences $ L_{I,J},M_{I,J},R_{I,J} $ are defined for the three independent regions in the very same way as $ S_I $ above with the exception that $ L_{I,J}[e] = 0, M_{I,J}[d] = 0 $ and $ R_{I,J}[f] = 0 $. %
			Hence, following essentially the same proof as in Lemma~\ref{lem:dmsI} we obtain
			\begin{lemma}
				\label{lem:timedms}
				Let $ \intervalset $ be a set of intervals and $ \iI,\iJ \in \mathcal I$  with $\iJ \in \overrightarrow{\mathcal P}(\iI,\mathcal I)$, then $ \dms{1}(\iI,\iJ) $ can be computed in $ O(\gamma\Ispan(\iI,\iJ)) $ time assuming all $ \dms{1}(\iK) $ and $ \dms{1}(\iK,L) $ values are computed for $ \iK,L \in \intervalset $, $ \length(\iK) < \fit(\iI,\iJ) $ and $ \Ispan(\iK,L) < \fit(\iI,\iJ) $.
			\end{lemma}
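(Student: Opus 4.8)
The plan is to follow the proof of Lemma~\ref{lem:dmsI} almost verbatim, isolating the three points that are genuinely new: the soundness of the five-way split, the overlap correction term in Recurrence~(\ref{rec:IJ}), and the bookkeeping needed to see that the stated induction hypothesis is exactly what is required. Write $\iI=[c,d]$ and $\iJ=[e,f]$; since $\iJ\in\overrightarrow{\overlap}(\iI,\intervalset)$ we have $c<e<d<f$, so the leftmost and rightmost endpoints of $\{\iI,\iJ\}$ are $c$ and $f$, and $\dms{1}(\iI,\iJ)$ asks for a max-weight 1-overlap set of $\intervalset[c,f]$ that contains both $\iI$ and $\iJ$. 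In this situation the two outermost of the five split sets, $\intervalset[c,c]$ and $\intervalset[f,f]$, are empty, which is why Recurrence~(\ref{rec:IJ}) has only the three summands $L_{I,J}[c+1]$, $M_{I,J}[e+1]$, $R_{I,J}[d+1]$ for the regions $\intervalset[c,e]$, $\intervalset[e,d]$, $\intervalset[d,f]$. First I would argue the split is sound: any interval of $\intervalset[c,f]$ not contained in one of these three regions must, since endpoints are distinct, cross the point $e$ or the point $d$; a short case check shows that such an interval then overlaps $\iI$ or $\iJ$ and also contains a point of the open interval $(e,d)$, the region shared by $\iI$ and $\iJ$. Applying Lemma~\ref{lem:picktwo} at such a point, together with $\iI$ and $\iJ$, rules the interval out of every 1-overlap set that already contains $\iI$ and $\iJ$, so discarding it loses nothing. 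Conversely, a symmetric endpoint-ordering check shows that every interval contained in one of the three regions is nested in $\iI$ and/or $\iJ$ (hence overlaps neither), and that intervals from two different regions cannot overlap; therefore a 1-overlap set of $\intervalset[c,f]$ through $\iI,\iJ$ is exactly the disjoint union of $\{\iI,\iJ\}$ with an independently chosen 1-overlap set of each region.

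Given this decomposition, Recurrence~(\ref{rec:IJ}) is correct: the pair $\{\iI,\iJ\}$ on its own contributes $w(\iI)+w(\iJ)-w(\iI,\iJ)$ (the last term because $\iI$ and $\iJ$ overlap), and $L_{I,J},M_{I,J},R_{I,J}$ add the optimal weights of the three regions. Each of these tables is a literal copy of $S_I$ from Recurrence~(\ref{rec:S}) restricted to the endpoints of its region, with the sentinels $L_{I,J}[e]=M_{I,J}[d]=R_{I,J}[f]=0$ playing the role of $S_I[b]=0$; hence their correctness follows by re-running the induction of Lemma~\ref{lem:dmsI} (induction on the number of intervals in the region, with the three-way choice ``discard / take as a singleton / take with a forward-overlap partner'' justified again by Lemma~\ref{lem:picktwo}), so nothing new is needed there. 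For the running time, each table is filled in one left-to-right sweep over the endpoints of its region, spending $O(\gamma)$ time per interval start point to scan the associated $D$-set of forward-overlap partners, exactly as in Lemma~\ref{lem:dmsI}; the three regions together contain at most the $O(\Ispan(\iI,\iJ))=O(f-c)$ endpoints of $\intervalset[c,f]$, giving total time $O(\gamma\,\Ispan(\iI,\iJ))$. Finally, every interval $\iK$ (resp.\ overlapping pair $\iK,L$) ever consulted inside a region lies strictly inside one of $[c,e]$, $[e,d]$, $[d,f]$, whose lengths are the three consecutive-endpoint gaps $e-c$, $d-e$, $f-d$ of $\{\iI,\iJ\}$; hence $\length(\iK)<\max\{e-c,d-e,f-d\}=\fit(\iI,\iJ)$ and $\Ispan(\iK,L)<\fit(\iI,\iJ)$, which is precisely the assumed induction hypothesis.

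The only step that is not purely mechanical is the soundness of splitting along an overlapping pair, i.e.\ certifying that the three regions are genuinely independent. Once it is phrased as ``every discarded interval crosses $e$ or $d$, thus overlaps $\iI$ or $\iJ$ and meets $(e,d)$, so Lemma~\ref{lem:picktwo} forbids it alongside $\iI,\iJ$,'' the remainder is a faithful re-run of the single-interval argument; the only places needing care are the boundary/sentinel values of $L_{I,J},M_{I,J},R_{I,J}$ and the $\fit$-versus-$\Ispan$ accounting that keeps the global induction well-founded.
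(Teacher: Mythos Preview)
Your proof is correct and follows exactly the approach the paper intends: the paper itself merely says ``following essentially the same proof as in Lemma~\ref{lem:dmsI}'' and relies on Fig.~\ref{fig:rec}(c), whereas you spell out the soundness of the three-region split via Lemma~\ref{lem:picktwo}, the weight correction $-w(\iI,\iJ)$, the $O(\gamma\,\Ispan(\iI,\iJ))$ running time, and the $\fit(\iI,\iJ)$ bookkeeping for the induction hypothesis. The one phrasing to tighten is ``nested in $\iI$ and/or $\iJ$ (hence overlaps neither)'': an interval in $\intervalset[c,e]$ is nested in $\iI$ but \emph{disjoint} from $\iJ$, so the conclusion ``overlaps neither'' comes from the full case check you invoke, not from nesting alone.
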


			\begin{lemma}
				\label{lem:dms}
				Let $ \intervalset $ be a set of intervals. The \dms{1} values for all $ \iI \in\intervalset $ and all pairs $ \iI,\iJ \in \intervalset $ with $ \iJ \in \overrightarrow{\mathcal{P}}(\iI,\intervalset) $ can be computed in $ O(\gamma^2\ell) $ time.
			\end{lemma}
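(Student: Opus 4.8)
The plan is to compute every needed $\dms{1}$ value in a single sweep that visits the relevant subinstances in order of non-decreasing \emph{size}, where we define the size of a single-interval subinstance $\{\iI\}$ to be $\length(\iI)$ and the size of an overlapping-pair subinstance $\{\iI,\iJ\}$ to be $\Ispan(\iI,\iJ)$. By Lemma~\ref{lem:picktwo} no connected set of three or more intervals is a $1$-overlap set, so the single intervals (there are $n$) together with the overlapping pairs (there are $|\overlap(\intervalset)| = |E|$) are exactly the subinstances whose $\dms{1}$ value we must produce; for a single interval we invoke Lemma~\ref{lem:dmsI}, for an overlapping pair Lemma~\ref{lem:timedms}.

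The first step is to check that this sweep is a legal evaluation order, i.e., that by the time we process a subinstance every subinstance it refers to has already been handled. When we reach $\{\iI\}$ its size is $\length(\iI)$, and Lemma~\ref{lem:dmsI} asks only for the $\dms{1}$ values of single intervals of length strictly smaller than $\length(\iI)$ and of overlapping pairs of span strictly smaller than $\length(\iI)$ --- hence of strictly smaller size. When we reach $\{\iI,\iJ\}$ its size is $\Ispan(\iI,\iJ)$, and Lemma~\ref{lem:timedms} asks only for single intervals of length smaller than $\fit(\iI,\iJ)$ and overlapping pairs of span smaller than $\fit(\iI,\iJ)$; since $\fit(\iI,\iJ)\le\Ispan(\iI,\iJ)$, these are again all of strictly smaller size. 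Thus subinstances of equal size never refer to one another, ties may be broken arbitrarily, and no circular dependency arises. I expect this bookkeeping --- in particular matching the $\fit$-based preconditions of Lemma~\ref{lem:timedms} to an order keyed on $\Ispan$ --- to be the only subtle point; the rest is a routine summation.

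For the running time, first note that because $\sigma(\intervalset)=\{1,\dots,2n\}$ all interval lengths and all pair spans are integers in $[1,2n]$, so the $n+|E|$ subinstances can be enumerated and bucket-sorted by size in $O(n+|E|)$ time, and the auxiliary overlap and nesting data consulted by the recurrences can be precomputed within the same bound; since $|E| = O(\gamma n)$ and $\ell \ge n$, all of this is $O(\gamma\ell)$ and therefore dominated. By Lemma~\ref{lem:dmsI} the single intervals cost $\sum_{\iI\in\intervalset} O(\gamma\length(\iI)) = O(\gamma\ell)$ in total. By Lemma~\ref{lem:timedms} the overlapping pairs cost $\sum_{\{\iI,\iJ\}\in\overlap(\intervalset)} O(\gamma\Ispan(\iI,\iJ))$, and for $\iI=[c,d]$, $\iJ=[e,f]$ with $c<e<d<f$ we have $\Ispan(\iI,\iJ)=f-c\le(d-c)+(f-e)=\length(\iI)+\length(\iJ)$, so this sum is at most $O(\gamma)\cdot\sum_{\iI\in\intervalset}\sum_{\iJ\in\overlap(\iI,\intervalset)}(\length(\iI)+\length(\iJ)) = O(\gamma^2\ell)$ by Lemma~\ref{lem:length}. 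Adding the two contributions and the $O(\gamma\ell)$ preprocessing gives the claimed $O(\gamma^2\ell)$ running time.
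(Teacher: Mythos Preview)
Your proof is correct and follows essentially the same route as the paper's: sum the per-interval cost $O(\gamma\length(\iI))$ from Lemma~\ref{lem:dmsI} to get $O(\gamma\ell)$, bound each pair cost $O(\gamma\,\Ispan(\iI,\iJ))$ from Lemma~\ref{lem:timedms} via $\Ispan(\iI,\iJ)\le\length(\iI)+\length(\iJ)$, and apply Lemma~\ref{lem:length} for the $O(\gamma^2\ell)$ total. Your additional, explicit verification that processing by non-decreasing size (length for singletons, span for pairs) satisfies the preconditions of both lemmas---in particular the observation that $\fit(\iI,\iJ)\le\Ispan(\iI,\iJ)$ makes the pair step legal---is something the paper leaves implicit, so your write-up is, if anything, more careful on this point.
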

			\begin{proof}
				For an interval $ \iI\in \intervalset $ the value  $ \dms{1}(\iI) $ is computed in $ O(\gamma\length(I)) $ time by Lemma~\ref{lem:dmsI}. With $\ell = \sum_{\iI \in \intervalset} \length(\iI) $ the claim follows for all $\iI \in \mathcal I$.

				By Lemma~\ref{lem:timedms} the value $ \dms{1}(\iI,\iJ) $ can be computed in $ O(\gamma\Ispan(\iI,\iJ)) $ time for each overlapping pair $ \iI,\iJ $ with $ \iJ \in \overrightarrow{\mathcal{P}}(\iI,\intervalset) $. Since $\Ispan(\iI,\iJ) \le \length(\iI) + \length(\iJ)$ the time bound of $O(\gamma^2\ell)$ follows by applying Lemma~\ref{lem:length}.				
			\end{proof}

			In the second phase of our algorithm we compute the maximum weight of a 1-overlap set for \intervalset by defining another recurrence $ T[x] $ for $ x \in \sigma(\intervalset) $ and re-using the \dms{1} values.
			The recurrence for $T$ is defined similarly to the recurrence of $S_I$ above. %
			We set $ T[2n] = 0 $. Let $ \iI = [a,b] \in \intervalset $ be an interval and $b \ne 2n$, then
			\begin{align}
				\label{rec:T}
					T[b] = T[b+1]
				&&
					T[a] =
					\max \left\{\begin{tabular}{@{}l@{}}
					$\{T[a+1]\}\ \cup$ \\
					$\{\dms{1}([a,b]) + T[b + 1]\}\ \cup$ \\ 
					$D_T(\iI,\intervalset)$
					\end{tabular}\right\},
			\end{align}
			where $D_T$ is defined analogously to $D_S$ by replacing the recurrence $S_I$ with $T$ in Equation~\ref{eq:DS}.
			The maximum weight of a 1-overlap set for \intervalset is found in $ T[1] $. 
			\begin{theorem}\label{thm:intervals}
				A max-weight 1-overlap set for a set of intervals \intervalset can be computed in $O(\gamma^2\ell) \subseteq O(|\intervalset|^4)$ time, where $\ell$ is the total interval length and $\gamma$ is the maximum degree of the corresponding overlap graph.
			\end{theorem}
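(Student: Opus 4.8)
The plan is to put together the two phases that have been set up above. The first phase is exactly Lemma~\ref{lem:dms}: processing single intervals and overlapping pairs by increasing size (so that each value depends only on already-computed ones), all values $\dms{1}(\iI)$ and $\dms{1}(\iI,\iJ)$ with $\iJ \in \overrightarrow{\overlap}(\iI,\intervalset)$ are obtained in $O(\gamma^2\ell)$ time. It then remains to show that the second-phase recurrence of Recurrence~(\ref{rec:T}) correctly yields the weight of a max-weight 1-overlap set in $T[1]$, that the second phase adds only $O(\gamma\ell)$ further time, and that $O(\gamma^2\ell) \subseteq O(|\intervalset|^4)$.

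For correctness I would prove, by induction over the endpoints in $\sigma(\intervalset)$ taken from right to left, that $T[x]$ equals the maximum weight of a 1-overlap set contained in $\intervalset[x,2n]$ (reading $T[2n+1]=0$). The base case $T[2n]=0$ holds since $\intervalset[2n,2n]=\emptyset$, and for a right endpoint $b<2n$ we have $\intervalset[b,2n]=\intervalset[b+1,2n]$ because endpoints are distinct, so $T[b]=T[b+1]$ is correct. For the left endpoint $a$ of an interval $\iI=[a,b]$ note that $\intervalset[a,2n]=\{\iI\}\cup\intervalset[a+1,2n]$. Let $\mathcal S\subseteq\intervalset[a,2n]$ be an optimal 1-overlap set. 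If $\iI\notin\mathcal S$, then $\mathcal S\subseteq\intervalset[a+1,2n]$ and $W(\mathcal S)\le T[a+1]$ by induction. If $\iI\in\mathcal S$, then $\iI$ is the unique leftmost-starting interval of $\mathcal S$ (only $\iI$ starts at $a$, and every interval of $\intervalset[a,2n]$ starts at a point $\ge a$), and the connected component of $\iI$ in $\mathcal S$ has at most two intervals (a connected 1-overlap set has at most two intervals, as observed before Lemma~\ref{lem:dmsI}). If that component is $\{\iI\}$, each remaining interval of $\mathcal S$ is nested in $\iI$ or lies in $\intervalset[b+1,2n]$, these two parts do not overlap, and the nested part together with $\iI$ is a 1-overlap set on $\intervalset[a,b]$ containing $\iI$, whence $W(\mathcal S)\le \dms{1}(\iI)+T[b+1]$. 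If the component is $\{\iI,\iJ\}$ with $\iJ=[e,f]$, then $e>a$ forces $a<e<b<f$, i.e.\ $\iJ\in\overrightarrow{\overlap}(\iI,\intervalset)$; the split along $\iI,\iJ$ (Lemma~\ref{lem:picktwo} and Fig.~\ref{fig:fivecases}) distributes the remaining intervals of $\mathcal S$ over $\intervalset[a,e]$, $\intervalset[e,b]$, $\intervalset[b,f]$, $\intervalset[f,2n]$ with no overlaps across these parts, so $W(\mathcal S)\le \dms{1}(\iI,\iJ)+T[f+1]$, which is one of the terms of $D_T(\iI,\intervalset)$. Conversely each option on the right-hand side of Recurrence~(\ref{rec:T}) is realized by the union of the corresponding witness sets, which live in disjoint ranges and hence do not overlap; therefore $T[a]$ equals the optimum. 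In particular $T[1]$ is the weight of a max-weight 1-overlap set for $\intervalset=\intervalset[1,2n]$, and an actual optimal set is recovered by standard backtracking through $T$, the $\dms{1}$ values, and the auxiliary recurrences.

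For the running time of the second phase, $T$ has $2n$ entries; a value at a right endpoint is copied in $O(1)$ time, while the value at the left endpoint of an interval $\iI$ is a maximum over $O(1)$ explicit terms and the $|D_T(\iI,\intervalset)|\le|\overrightarrow{\overlap}(\iI,\intervalset)|\le\gamma$ terms of $D_T$, i.e.\ $O(\gamma)$ time. Summing over all $n$ intervals this is $O(n\gamma)\subseteq O(\gamma\ell)$ because $\ell=\sum_{\iI\in\intervalset}\length(\iI)\ge n$, and the initial relabelling to $\sigma(\intervalset)=\{1,\dots,2n\}$ costs only $O(\ell)$; adding the $O(\gamma^2\ell)$ of Lemma~\ref{lem:dms} keeps the total at $O(\gamma^2\ell)$. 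Finally, $\gamma\le|\intervalset|-1$ is the maximum degree of an $|\intervalset|$-vertex graph and $\length(\iI)\le 2|\intervalset|-1$ for every interval, so $\ell<2|\intervalset|^2$ and hence $O(\gamma^2\ell)\subseteq O(|\intervalset|^4)$, which proves the theorem.

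I expect the correctness of Recurrence~(\ref{rec:T}) to be the main obstacle: one has to check that the component-based case distinction for $\mathcal S$ is exhaustive and that every interval of $\mathcal S$ other than the anchor $\iI$ and its possible partner $\iJ$ is charged either to the subproblem captured by the relevant $\dms{1}$ value or to a $T$-subproblem strictly to the right — in essence replaying the argument of Lemma~\ref{lem:dmsI} one level up, with the additional use of the fact that $\iI$ is the leftmost-starting interval of $\mathcal S$ so that any pairing partner overlaps it on the right. The timing argument and the $O(|\intervalset|^4)$ estimate are routine.
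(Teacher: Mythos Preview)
Your proposal is correct and follows essentially the same approach as the paper. The only cosmetic difference is in how correctness of Recurrence~(\ref{rec:T}) is argued: you spell out the right-to-left induction on $T$ directly, whereas the paper observes that adding a dummy interval $[0,2n+1]$ of weight zero makes $T$ an instance of the $S_I$ recurrence and defers to the induction already done in Lemma~\ref{lem:dmsI}; your second-phase time bound $O(\gamma\ell)$ is in fact a bit sharper than the paper's stated $O(\gamma^2\ell)$ for that phase, but the first phase dominates either way.
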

			
			 \begin{proof}
			 	The time to compute all \dms{1} values is $ O(\gamma^2\ell) $ with Lemma~\ref{lem:dms}. As argued the optimal solution is found by computing $ T[1] $. 
				The time to compute $ T $ in Recurrence~(\ref{rec:T}) is again dominated by the maximization, which itself is dominated by the evaluation of the $ D_T $ sets. The size of these sets is exactly $ \gamma $ times the sum we bounded in Lemma~\ref{lem:length}. So the total time to compute $T$ is $ O(\gamma^2\ell) $.
				 Hence the total running time is $ O(\gamma^2\ell) $. From $\gamma \le |\intervalset|$ and $\ell \le |\intervalset|^2$ we obtain the coarser bound $O(|\intervalset|^4)$.
			
			 	It remains to show the correctness of Recurrence~(\ref{rec:T}). 
				Again we can treat it with the same induction used in the proof of Lemma~\ref{lem:dmsI}. To see this we introduce an interval $ [0,2n+1] $ with weight zero. 
				Now the computation of the maximum weight of a 1-overlap set for $ \intervalset $ is the same as computing all $ \dms{1}$ values for the instance  $\intervalset \cup \{[0,2n+1]\} $.
				Using standard backtracking, the same algorithm can be used to compute the max-weight 1-overlap set instead of only its weight.				
			 \end{proof}

\subsection{An \XP-algorithm for max-weight $k$-overlap sets}\label{sec:general}

		In this section we generalize our algorithm to $ k \ge 2 $. 
		While it is not possible to directly generalize Recurrences~(\ref{rec:S}) and~(\ref{rec:T}) we do use similar concepts. 
		The difficulty for $ k > 1 $ is that the solution can have arbitrarily large connected parts, e.g., a 2-overlap set can include arbitrarily long paths and cycles. 
		So we can no longer partition an instance along connected components into a constant number of independent sub-instances as we did for the case $k=1$. 
		Due to space constraints we only sketch the main ideas here and refer the reader to Appendix~\ref{apx:XP} for all omitted proofs.
		
		We first generalize the definition of $\dms{1}$. 
		Let $\intervalset$ be a set of $n$ intervals as before and $I=[a,b] \in \intervalset$.
		We define the value $\dms{k}(I)$ as the maximum weight of a $k$-overlap set on $\intervalset[a,b]$ that includes $I$ in the $k$-overlap set (if one exists).
		When computing such a value $\dms{k}(I)$, we consider all subsets $\mathcal{J} \subseteq \mathcal{P}(I,\intervalset)$ of cardinality $|\mathcal{J}| \le k$ of at most $k$ neighbors of $I$ to be included in a $k$-overlap set, while $\mathcal{P}(I,\intervalset) \setminus \mathcal{J}$ is excluded.

		For keeping track of how many intervals are still allowed to overlap each interval we introduce the \emph{capacity} of each interval boundary $ i \in \sigma(\intervalset) = \{1, 2, \dots, 2n\}$. These capacities are stored in a vector $ \lambda = (\lambda_1,\dots,\lambda_{2n}) $, where each $\lambda_i$ is the capacity of the interval boundary $i \in \sigma(\intervalset)$. 
		Each $\lambda_i$ is basically a value in the set $\{0,1,\dots, k\}$ that indicates how many additional intervals may still overlap the interval corresponding to $i$, see Fig.~\ref{fig:gen_dms}.
		We actually define $ \dms{k}_\lambda([a,b]) $ as the maximum weight of a $k$-overlap set in $ \intervalset[a,b] $ with pre-defined capacities $\lambda$. 
		In the appendix we prove that the number of relevant vectors $\lambda$ to consider for each interval can be bounded by $O(\gamma^k)$, where $\gamma$ is the maximum degree of the overlap graph corresponding to $\intervalset$.

	\begin{figure}[tb]
		\centering
		\includegraphics{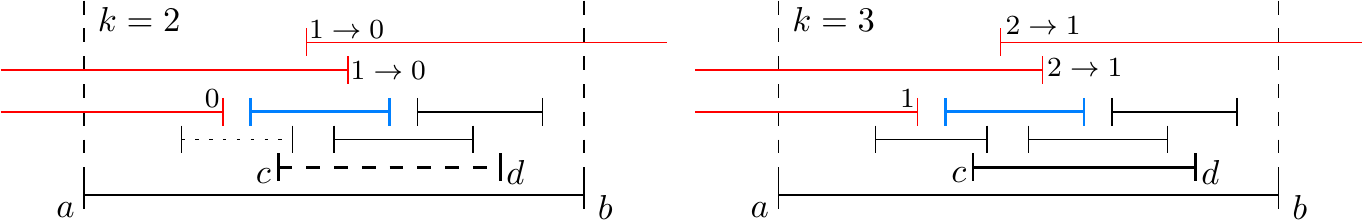}
		\centering
		\caption{Examples for $ k = 2 $ and $ k = 3 $. The red intervals are in a solution set. The arrows indicate how the capacities change if the blue interval is included in a solution. For $ k = 2 $ we cannot use the interval $ [c,d] $ since some capacities are zero, but for $ k = 3 $ it remains possible. }
		\label{fig:gen_dms} 
	\end{figure}
	
		For our recursive definition we assume that when computing $\dms{k}_\lambda(\iI)$ all values 	$ \dms{k}_\lambda(\iJ) $ with $ \iJ \in \intervalset $ and $ \length(\iJ) < \length(\iI) $ are already computed. The following recurrence computes one $ \dms{k}_\lambda(\iI) $ value given a valid capacity vector $ \lambda $ and an interval $ \iI = [a,b] \in \intervalset $
		\begin{align}
			\label{rec:gen_dms}
			\dms{k}_\lambda([a,b]) = S_{I,\lambda}[a+1] + w([a,b]).
		\end{align}
		This means that we select $I$ for the $k$-overlap set, add its weight $w(I)$, and recursively solve the subinstance of intervals nested in $I$ subject to the capacities $\lambda$.
		As in the approach for the $ \dms{1} $ values the main work is done in recurrence $ S_{I,\lambda}[x] $ where $ x \in \sigma(\intervalset[a,b]) $. 
		In the appendix we prove Lemma~\ref{lem:fix_lambdaApx}, which shows the correctness of this computation using a similar induction-based proof as Lemma~\ref{lem:dmsI} for $ k = 1 $, but being more careful with the computation of the correct weights.
		Lemma~\ref{lem:fix_lambda} is a simplified version of Lemma~\ref{lem:fix_lambdaApx}.
		\begin{lemma}
			\label{lem:fix_lambda}
			Let \intervalset be a set of intervals, $\iI \in \intervalset$, $\lambda$ a valid capacity vector for $\iI$, and $\gamma$ the maximum degree of the corresponding overlap graph. 
			Then $ \dms{k}_\lambda(\iI) $ can be computed in $ O(\gamma^k\length(\iI))$ time once the $ \dms{k}_{\lambda}(\iJ) $ values are computed for all $J\in \intervalset$ with $ \length(\iJ) < \length(\iI) $. %
		\end{lemma}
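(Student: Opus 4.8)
The plan is to mirror the proof of Lemma~\ref{lem:dmsI} for $k=1$, now threading the capacity vector $\lambda$ through the recursion, and to split the argument into a correctness part and a running-time part, deferring the former to the appendix. For correctness I would invoke Lemma~\ref{lem:fix_lambdaApx} (proved in Appendix~\ref{apx:XP}): by Recurrence~(\ref{rec:gen_dms}), $\dms{k}_\lambda(\iI)$ for $\iI=[a,b]$ equals $S_{I,\lambda}[a+1]+w([a,b])$, and Lemma~\ref{lem:fix_lambdaApx} asserts precisely that $S_{I,\lambda}[a+1]$ is the maximum weight of a $k$-overlap set on the nested set $\mathcal{N}(\iI,\intervalset)$ that respects the capacities recorded in $\lambda$. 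Its proof would run by the same induction on $|\mathcal{N}(\iI,\intervalset)|$ as in the case $k=1$ (recall Fig.~\ref{fig:rec} and Fig.~\ref{fig:gen_dms}), the two new ingredients being that the capacities must be propagated consistently into every recursive subcall and that for each selected pair of overlapping intervals the corresponding edge weight is subtracted exactly once. Granting this, it remains only to bound the time of evaluating $S_{I,\lambda}$.

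For the time bound I would argue as for $S_I$ in Lemma~\ref{lem:dmsI}. The table $S_{I,\lambda}$ is filled in a single sweep over the $O(\length(\iI))$ endpoints in $\sigma(\intervalset[a,b])$; at an endpoint that is the right end of an interval, $S_{I,\lambda}$ merely copies the value of the adjacent endpoint, in $O(1)$ time. At an endpoint $c$ that is the left end of some $[c,d]\in\intervalset[a,b]$, the maximization considers discarding $[c,d]$, and otherwise selecting $[c,d]$ together with a subset $\mathcal{J}\subseteq\overrightarrow{\mathcal{P}}([c,d],\intervalset[a,b])$ of at most $k$ of the intervals overlapping $[c,d]$ on the right that is compatible with the current capacities. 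Since $[c,d]$ overlaps at most $\gamma$ intervals, there are at most $\sum_{j=0}^{k}\binom{\gamma}{j}\le(k+1)\gamma^{k}=O(\gamma^{k})$ such subsets for fixed $k$. For each subset $\mathcal{J}$ the capacity update touches only the $O(k)$ entries belonging to $[c,d]$ and to the intervals in $\mathcal{J}$; we then look up the at most $k$ already-computed $\dms{k}$ values for the intervals in $\mathcal{J}$ (with suitably updated capacities) together with a constant number of $S_{I,\lambda}$ entries, and combine them with the relevant vertex and edge weights, all in $O(k)=O(1)$ time. Hence each left endpoint contributes $O(\gamma^{k})$ work, and summing over all endpoints yields the claimed $O(\gamma^{k}\length(\iI))$ bound for computing $\dms{k}_\lambda(\iI)$.

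The step I expect to be the real obstacle is not this time analysis but the correctness statement it relies on --- which is exactly why it is delegated to the appendix. For $k\ge 2$ a $k$-overlap set need not decompose into a bounded number of independent subinstances along its connected components (a $2$-overlap set may contain arbitrarily long paths and cycles), so the recursion must use the capacity vector to remember how much overlap budget each interval crossing a split point still has, and one must verify both that this state suffices to keep the recombined solution feasible and that no overlap is charged twice. The appendix must additionally bound the number of capacity vectors that can actually arise for a fixed interval by $O(\gamma^{k})$, which is what keeps the table of $\dms{k}_\lambda$ values --- assumed available here --- of polynomial size for fixed $k$, and what eventually produces the $O(|\mathcal{E}|^{2k+2})$ overall running time. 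With Lemma~\ref{lem:fix_lambdaApx} granted, the stated bound follows from the sweep argument above.
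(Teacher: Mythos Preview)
Your proposal is essentially correct and follows the paper's approach: defer correctness to Lemma~\ref{lem:fix_lambdaApx} and bound the time by a sweep over $O(\length(\iI))$ endpoints with $O(\gamma^k)$ choices at each left endpoint. One imprecision worth flagging: in the paper's recurrence, at the left endpoint $c$ of $[c,d]$ you do \emph{not} look up $\dms{k}$ values for the intervals in $\mathcal J$; rather, you iterate over the legal successors $\lambda'\in L'(\lambda,[c,d],\intervalset)$ and for each look up the \emph{single} value $\dms{k}_{\lambda'}([c,d])$ together with $S_{I,\lambda'}[d+1]$ and a weight correction --- the intervals in $\mathcal J$ are encoded only through $\lambda'$. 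The count of legal successors also picks up a $(k+1)^k$ factor from splitting capacities between the two endpoints of each newly fixed interval (Lemma~\ref{lem:successor_size}), but since $k$ is fixed this is $O(1)$ and your stated $O(\gamma^k\length(\iI))$ bound stands.
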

		Applying Lemma~\ref{lem:fix_lambda} to all $ \iI \in \intervalset $ and all valid capacity vectors $ \lambda $ results in a running time of $ O(\gamma^{2k}\ell) $ to compute all $ \dms{k}_\lambda(\iI) $ values, where $\ell$ is the total interval length (see Lemma~\ref{lem:gen_computeall} in Appendix~\ref{apx:XP}).

		Now that we know how to compute all values $\dms{k}_\lambda(I)$ for all $I \in \intervalset$ and all relevant capacity vectors $\lambda$, we can obtain the optimal solution by introducing a dummy interval $\hat{I}$ with weight $w(\hat{I})=0$ that nests the entire set $\intervalset$. We compute the value $\dms{k}_{\hat{\lambda}}(\hat{I})$ for a capacity vector $\hat{\lambda}$ that puts no prior restrictions on the intervals in $\intervalset$. This solution obviously contains the max-weight $k$-overlap set for $\intervalset$. We summarize:

		\begin{restatable}{theorem}{thmgeneral}\label{thm:general}
			A max-weight $k$-overlap set for a set of intervals \intervalset  %
			can be computed in $ O(\gamma^{2k}\ell) \subseteq O(|\intervalset|^{2k+2})$ time, where $\ell$ is the total interval length and $\gamma$ is the maximum degree of the corresponding overlap graph.
		\end{restatable}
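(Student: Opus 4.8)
The plan is to avoid building a separate top-level recurrence and instead reduce the whole problem to a single evaluation of $\dms{k}_\lambda$ on a slightly enlarged instance, reusing all the machinery from Section~\ref{sec:general}. Concretely, I would add a dummy interval $\hat{I}=[0,2n+1]$ with $\weight{\hat I}=0$ to $\intervalset$, obtaining $\hat{\intervalset}=\intervalset\cup\{\hat I\}$. Since $\hat I$ nests every interval of $\intervalset$, it overlaps none of them, and since its endpoints are the extreme points $0$ and $2n+1$ we have $\hat{\intervalset}[0,2n+1]=\hat{\intervalset}$. I would then fix $\hat{\lambda}$ to be the capacity vector that assigns the full budget $k$ to every interval boundary in $\sigma(\intervalset)$ (and to the two boundaries of $\hat I$), i.e., the vector imposing no prior restriction, and compute $\dms{k}_{\hat{\lambda}}(\hat I)$ via Recurrence~(\ref{rec:gen_dms}), together with backtracking to recover the set.

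The correctness argument is then short. By the definition of $\dms{k}_{\hat{\lambda}}(\hat I)$, this value is $S_{\hat I,\hat\lambda}[1]+\weight{\hat I}=S_{\hat I,\hat\lambda}[1]$, i.e., the maximum weight of a $k$-overlap set of $\hat{\intervalset}$ that includes $\hat I$ and respects $\hat\lambda$. Because $\hat I$ overlaps nothing, including it is always feasible and contributes weight $0$ and no edge weight; and because $\hat\lambda$ places no restriction on the intervals of $\intervalset$, a $k$-overlap set of $\hat{\intervalset}$ respecting $\hat\lambda$ is, after deleting $\hat I$, exactly a $k$-overlap set of $\intervalset$ in the original sense, and conversely. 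Hence $\dms{k}_{\hat{\lambda}}(\hat I)$ equals the weight of a max-weight $k$-overlap set of $\intervalset$, and standard backtracking through the recurrences $S_{I,\lambda}$ recovers the set itself, exactly as in the proof of Theorem~\ref{thm:intervals}. I would also spell out the induction base implicit in Lemma~\ref{lem:fix_lambda}: the shortest interval of any sub-instance has no nested intervals, so $S_{I,\lambda}$ bottoms out immediately.

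For the running time I would invoke Lemma~\ref{lem:gen_computeall} from Appendix~\ref{apx:XP}: evaluating $\dms{k}_\lambda(I)$ for all $I$ and all $O(\gamma^k)$ relevant capacity vectors $\lambda$ on the instance $\hat{\intervalset}$ costs $O(\gamma^{2k}\ell)$ time. Adding $\hat I$ increases the number of intervals by one and the total length by $O(n)$, so it does not affect the asymptotics, and the one extra application of Lemma~\ref{lem:fix_lambda} to $\hat I$ itself costs $O(\gamma^k\length(\hat I))=O(\gamma^k n)$, which is dominated. The coarser bound follows from $\gamma\le|\intervalset|$ and $\ell\le|\intervalset|^2$, giving $O(\gamma^{2k}\ell)\subseteq O(|\intervalset|^{2k+2})$.

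The genuine technical difficulty is not in this wrap-up but in the two appendix ingredients it rests on: (i) showing that only $O(\gamma^k)$ capacity vectors per interval are ever relevant — the point being that a boundary has capacity below $k$ only because one of the at most $\gamma$ intervals crossing it was selected, so a bounded amount of information suffices to describe the reachable states; and (ii) the careful weight bookkeeping inside $S_{I,\lambda}$, which unlike the $k=1$ case must propagate partial overlap budgets across the recursion so that every removed (resp.\ shifted) crossing is counted exactly once, as in Lemma~\ref{lem:ma1indset}. Within the theorem's own proof the only thing to be vigilant about is that the dummy interval and $\hat\lambda$ are set up so that $\dms{k}_{\hat\lambda}(\hat I)$ coincides with the global optimum with no off-by-one in the weights.
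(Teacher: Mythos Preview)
Your approach is essentially the paper's: introduce the dummy interval $\hat I=[0,2n+1]$ with weight $0$, compute $\dms{k}_{\hat\lambda}(\hat I)$, and invoke Lemma~\ref{lem:gen_computeall} for the $O(\gamma^{2k}\ell)$ bound (absorbing the $(k+1)^{2k}$ factor since $k$ is fixed), then derive $O(|\intervalset|^{2k+2})$ from $\gamma\le|\intervalset|$ and $\ell\le|\intervalset|^2$.

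One small but real fix is needed in your choice of $\hat\lambda$. In the paper's encoding, a capacity in $\{0,\dots,k\}$ means the interval has already been \emph{selected} with that residual budget; ``no prior restriction'' is encoded by $\bot$. The paper therefore takes $\hat\lambda_{0,2n+1}=0$ and all other entries equal to $\bot$. If you instead set every boundary to $k$, the legal-successor definition breaks: for an interval $I$ with more than $k$ overlaps, every neighbour already has $\lambda_{i,j}\notin\{\bot,\infty\}$, yet the definition requires the neighbours outside the chosen set $\mathcal J$ (of size at most $k$) to satisfy $\lambda_{i,j}\in\{\bot,\infty\}$, so no legal successor exists and the recurrence cannot proceed. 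With $\hat\lambda$ corrected to use $\bot$, your argument goes through verbatim.
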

	
		The running time in Theorem~\ref{thm:general} implies that both the max-weight $k$-overlap set problem and the equivalent \ksubgraphN{k} problem for circle (overlap) graphs are in \XP.\footnote{The class \XP\ contains problems that can be solved in time $O(n^{f(k)})$, where $n$ is the input size, $k$ is a parameter, and $f$ is a computable function.} This fact alone can alternatively be derived from a metatheorem of Fomin et al.~\cite{fomin2015large} as follows.\footnote{We thank an anonymous reviewer of an earlier version for pointing us to this fact.} 
		The number of minimal separators of circle graphs can be polynomially bounded by $O(n^2)$ as shown by Kloks~\cite{kloks1996treewidth}. Further, since we are interested in a bounded-degree induced subgraph $G[V']$ of a circle graph $G$, we know from Gaspers et al.~\cite{gaspers2009exponential} that $G[V']$ has treewidth at most four times the maximum degree $k$. With these two pre-conditions the metatheorem of Fomin et al.~\cite{fomin2015large} yields the existence of an \XP-time algorithm for \ksubgraphN{k} on circle graphs.
However, the running time obtained from Fomin et al.~\cite{fomin2015large} is $ O(|\Pi_G|\cdot n^{t+4} \cdot f(t,\phi)) $ where $ |\Pi_G| $ is the number of potential cliques in $ G $, $ t $ is the treewidth of $ G[V'] $ with $ V' \subseteq V $ being the solution set, and $ f $ is a tower function depending only on $ t $ and the CMSO (Counting Monadic Second Order Logic) formula $ \phi $ (compare Thomas~\cite{thomas1996languages} proving this already for MSO formulas). Let $ k $ be the desired degree of a $ \ksubgraphN{k} $ instance, then the treewidth of $G[V']$ is at most $ 4k $. Further by Kloks~\cite{kloks1996treewidth} we know $ |\Pi_G| = O(n^2) $. Hence the running-time of the algorithm would be in $ O(n^{4k+6} \cdot f(4k, \phi)) $, whereas our problem-specific algorithm has running time $ O(n^{2k+2})$. %

		\section{Experiments}\label{sec:experiments}
		We implemented the algorithm for \ksubgraphN{1} from Section~\ref{sec:algo} and the independent set algorithm for \ksubgraphN{0} in \CC. %
		The compiler was \GG, version $ 7.2.0 $ with set -O$ 3 $ flag. Further we used the OGDF library~\cite{chimani2013open} in its most current snapshot version. Experiments were run on a standard desktop computer with an eight core Intel i7-6700 CPU clocked at $ 3.4 $ GHz and $ 16 $ GB RAM, running Archlinux and kernel version $ 4.13.12 $. The implementation is available under \url{https://www.ac.tuwien.ac.at/two-sided-layouts/}.
		
		We generated two sets of random biconnected graphs using OGDF. The first set has $ 5,156 $ and the second $ 4,822 $ different non-planar graphs. We varied the edge to vertex ratio between $ 1.0 $ and $ 5.0 $ and the number of vertices between 20 and 60.  In addition, we used the Rome graph library (\url{http://www.graphdrawing.org/data.html}) consisting of $ 8,504 $ non-planar graphs with a density of $ 0.5 $ to $ 2.1 $ and 10 to 100 vertices. The relative sparsity of the test instances is not a drawback, since it is impractical to use circular layouts for visualizing very dense graphs. For dense graphs one would have to apply some form of bundling strategy to reduce edge clutter. Given such a bundled layout one could consider minimizing bundled crossings~\cite{alam2016bundled} and adapt our algorithms to the resulting ``bundled'' circle graph.

		For the random test graphs Fig.~\ref{fig:saved_crossings} displays the percentage of crossings saved by the layouts with exterior edges versus the one-sided circular layout implemented in OGDF. Compared to the approach without exterior crossings we find that allowing up to one crossing per edge in the exterior one can save around 11\% more crossings on average for the random instances and $ 7.5 $\%  for the Rome graphs. Setting the edge weight in \ksubgraphN{1} to one (i.e., counting exterior crossings in the optimization) or two (i.e., not counting exterior crossings in the optimization), has (almost) no noticeable effect. %
		
		Figure~\ref{fig:computation_time} depicts the times needed to compute the layouts for the respective densities. We observe the expected behaviour. The case of $ k = 0 $  with $O(|\mathcal{E}|^2)$ time is a lot faster as the graphs get more dense. Still for our sparse test instances our algorithm for \ksubgraphN{1} with $ O(|\mathcal{E}|^4) $ time runs sufficiently fast to be used on graphs with up to $ 60 $ vertices ($ 82 $ seconds on average). For additional plots we refer to Appendix~\ref{apx:plots}.
				
		\begin{figure}
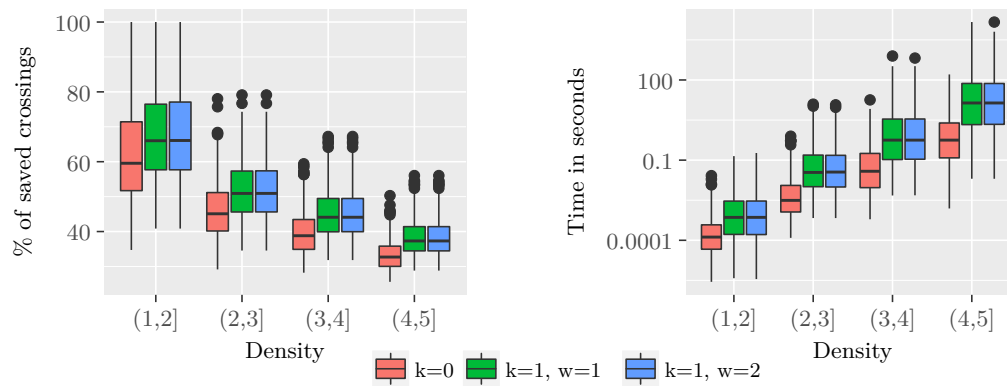

			\centering
			\begin{subfigure}[t]{0.48\textwidth}
				\input{plots/random3/ag_saved_crossings.tex}
				\subcaption{Density vs the percentage of saved crossings.}
				\label{fig:saved_crossings}
			\end{subfigure}
			\hfill
			\begin{subfigure}[t]{0.48\textwidth}
				\input{plots/random3/ag_time.tex}
				\subcaption{Density vs the computation time.}				
				\label{fig:computation_time}				
			\end{subfigure}
			\begin{subfigure}[t]{0.5\textwidth}
				\vspace{-5.2cm}
				\hspace{-0.55cm}
				% Created by tikzDevice version 0.10.1 on 2017-12-11 19:33:30
% !TEX encoding = UTF-8 Unicode
\begin{tikzpicture}[x=1pt,y=1pt]
\begin{scope}
\definecolor{drawColor}{RGB}{255,255,255}
\definecolor{fillColor}{gray}{0.95}

\path[draw=drawColor,line width= 0.6pt,line join=round,line cap=round,fill=fillColor] ( 52.44, 11.19) rectangle ( 66.89, 25.64);
\end{scope}
\begin{scope}
\definecolor{drawColor}{gray}{0.20}

\path[draw=drawColor,line width= 0.6pt,line join=round,line cap=round] ( 59.66, 12.64) --
	( 59.66, 14.80);

\path[draw=drawColor,line width= 0.6pt,line join=round,line cap=round] ( 59.66, 22.03) --
	( 59.66, 24.20);
\definecolor{fillColor}{RGB}{248,118,109}

\path[draw=drawColor,line width= 0.6pt,line join=round,line cap=round,fill=fillColor] ( 54.24, 14.80) rectangle ( 65.08, 22.03);

\path[draw=drawColor,line width= 0.6pt,line join=round,line cap=round] ( 54.24, 18.42) --
	( 65.08, 18.42);
\end{scope}
\begin{scope}
\path[clip] (  0.00,  0.00) rectangle (180.67,144.54);
\definecolor{drawColor}{RGB}{255,255,255}
\definecolor{fillColor}{gray}{0.95}

\path[draw=drawColor,line width= 0.6pt,line join=round,line cap=round,fill=fillColor] ( 86.39, 11.19) rectangle (100.84, 25.64);
\end{scope}
\begin{scope}
\definecolor{drawColor}{gray}{0.20}

\path[draw=drawColor,line width= 0.6pt,line join=round,line cap=round] ( 93.62, 12.64) --
	( 93.62, 14.80);

\path[draw=drawColor,line width= 0.6pt,line join=round,line cap=round] ( 93.62, 22.03) --
	( 93.62, 24.20);
\definecolor{fillColor}{RGB}{0,186,56}

\path[draw=drawColor,line width= 0.6pt,line join=round,line cap=round,fill=fillColor] ( 88.20, 14.80) rectangle ( 99.04, 22.03);

\path[draw=drawColor,line width= 0.6pt,line join=round,line cap=round] ( 88.20, 18.42) --
	( 99.04, 18.42);
\end{scope}
\begin{scope}
\definecolor{drawColor}{RGB}{255,255,255}
\definecolor{fillColor}{gray}{0.95}

\path[draw=drawColor,line width= 0.6pt,line join=round,line cap=round,fill=fillColor] (146.00, 11.19) rectangle (160.46, 25.64);
\end{scope}
\begin{scope}
\definecolor{drawColor}{gray}{0.20}

\path[draw=drawColor,line width= 0.6pt,line join=round,line cap=round] (153.23, 12.64) --
	(153.23, 14.80);

\path[draw=drawColor,line width= 0.6pt,line join=round,line cap=round] (153.23, 22.03) --
	(153.23, 24.20);
\definecolor{fillColor}{RGB}{97,156,255}

\path[draw=drawColor,line width= 0.6pt,line join=round,line cap=round,fill=fillColor] (147.81, 14.80) rectangle (158.65, 22.03);

\path[draw=drawColor,line width= 0.6pt,line join=round,line cap=round] (147.81, 18.42) --
	(158.65, 18.42);
\end{scope}
\begin{scope}
\definecolor{drawColor}{RGB}{0,0,0}

\node[text=drawColor,anchor=base west,inner sep=0pt, outer sep=0pt, scale=  0.88] at ( 68.70, 15.39) {k=0};
\end{scope}
\begin{scope}
\definecolor{drawColor}{RGB}{0,0,0}

\node[text=drawColor,anchor=base west,inner sep=0pt, outer sep=0pt, scale=  0.88] at (102.65, 15.39) {k=1, w=1};
\end{scope}
\begin{scope}
\definecolor{drawColor}{RGB}{0,0,0}

\node[text=drawColor,anchor=base west,inner sep=0pt, outer sep=0pt, scale=  0.88] at (162.26, 15.39) {k=1, w=2};
\end{scope}
%\draw [red] (current bounding box.south east) rectangle (current bounding box.north west);
\end{tikzpicture}
			\end{subfigure}

			\caption{Plots for the test set with 4881 graphs. $ w $ is the weight given to the edges of the circle graph, see Section~\ref{sub:problem_transformation}. This test set has between 20 and 60 vertices and an average density of $ 2.6 $.}
			\label{fig:saved}		
		\end{figure}

		Our tests show a clear improvement in crossing reduction when going from $ k = 0 $ to $ k = 1 $. Of course this comes with a non-negligible runtime increase. For the practically interesting sparse instances, though, \ksubgraph{1} can be solved fast enough to be useful in practice.
		
		\section{Open questions}		

			The overall hardness of the \ksubgraphN{k} problem on circle graphs, parametrized by just the desired degree $ k $ remains open. While we could show 
			\NP-hardness, we do not know whether an \FPT-algorithm exists or whether the problem is $\W[1]$-hard.
			In terms of the motivating graph layout problem crossing minimization is known as a major factor for readability. 
			Yet, practical two-sided layout algorithms must also apply suitable vertex-ordering heuristics and they should further take into account the length and actual routing of exterior edges.
			Edge bundling approaches for both interior and exterior edges promise to further reduce visual clutter, but then bundling and bundled crossing minimization should be considered simultaneously.
			It would also be interesting to generalize the problem from circular layouts to other layout types, where many but not necessarily all vertices can be fixed on a boundary curve.
		\bibliographystyle{plainurl}
		\bibliography{p53-klute}
		
		\newpage
		\appendix

		\section{Omitted proofs from Section~\ref{sec:general}}\label{apx:XP}
		
		We start by describing the capacity vectors in more detail.
		Each capacity $\lambda_i$ is a value in the set $  \{0,1,\dots,k,\infty, \bot\} $. If $ \lambda_i = \bot $ we call it \emph{undefined}, i.e., no decision about its interval has been made. If $ \lambda_i = \infty $ we say it has \emph{unlimited} capacity, which means that the corresponding interval is not selected for the candidate solution. Finally if $ \lambda_i = \alpha \in \{0,1,\dots,k\} $ we say it has capacity $ \alpha $, meaning that we can still add $ \alpha $ more intervals which overlap the interval corresponding to $ i $, recall Fig.~\ref{fig:gen_dms}.

			For an interval $[a,b]$ we use the short-hand notation $ \lambda_{a,b} = x $ to state that $ \lambda_a = x $ and $ \lambda_b = x $; likewise we use $ \lambda_{a,b} \neq x $ to state that $ \lambda_a \neq x $ and $ \lambda_b \neq x $.
		
			We say a capacity vector $ \lambda $ is \emph{valid for interval $ \iI = [a,b]$} if both capacities $ \lambda_{a,b} \not\in \{\bot, \infty \} $ and at most $ k $ intervals $ [c,d] \in \mathcal{P}(\iI,\intervalset) $ have $ \lambda_{c,d} \neq \infty $. 
			We say $ \lambda $ is \emph{valid for \intervalset} if it is valid for every $ \iI \in \intervalset $.

			\begin{definition}
				Let \intervalset be a set of intervals, $ \iI = [a,b] \in \intervalset $  and $ \lambda $ a valid capacity vector for $ \intervalset $ such that $ \lambda_{a,b} \neq \infty $. Then $ \lambda' $ is a \emph{legal successor} of $ \lambda $ for $\iI$ (written $ \lambda' \leftarrow_\iI \lambda $) if there is a set $\mathcal{J} \subseteq \mathcal{P}(\iI,\intervalset)$ with $|\mathcal{J}|\le k$ such that for all $[i,j] \in \mathcal{J}$ we have $\lambda_{i,j} \ne \infty$, for all $[i,j] \in \mathcal{P}(\iI,\intervalset) \setminus \mathcal{J}$ we have $\lambda_{i,j} \in \{\bot,\infty\}$ and such that for all $ [x,y] \in \intervalset $ 
				\begin{align*}
				\lambda_{x}'  &= \lambda_{x}, \quad  \lambda_{y}' = \lambda_{y} &  &\text{if }  [x,y] \not\in \mathcal{P}(\iI,\intervalset) \cup \{\iI\}  \\				
				\lambda'_{x,y} &= 0 && \text{if } [x,y] = [a,b] = \iI\\
				\lambda'_{x,y} &= \infty && \text{if } [x,y] \in \mathcal{P}(\iI,\intervalset) \setminus \mathcal{J}\\
				\lambda_x' &= \alpha, \quad \lambda_y' = k - t_{x,y} - \alpha \quad \text{for } \alpha \in [0,k-t_{x,y}]  && \text{if } [x,y] \in \mathcal{J} \text{ and } \lambda_{x,y} = \bot  \\
				\lambda_x' &= \lambda_x - l_{x,y}, \quad \lambda_y' = \lambda_y - r_{x,y} && \text{if } [x,y] \in \mathcal{J} \text{ and } \lambda_{x,y} \ne \bot,
				\end{align*}
				where $ t_{x,y} = |\{ [i,j] \mid [i,j] \in \mathcal{P}([x,y],\intervalset) \text{ and } \left( [i,j] \in \mathcal{J} \text{ or } \lambda_{i,j} \not\in \{\infty,\bot\} \right)\}| $,  $ l_{x,y} = |\{ [i,j] \mid [i,j] \in \mathcal{J} \cup \{I\}, i < x < j, \lambda_{i,j} = \bot \}|$, and $ r_{x,y} = | \{ [i,j] \mid [i,j] \in \mathcal{J} \cup \{I\}, i < y < j, \lambda_{i,j} = \bot \} |$.
			\end{definition}
			
			Intuitively, a legal successor of a valid capacity vector for $I$ fixes the interval $I$ to be in the considered $k$-overlap set and determines for all intervals $J \in \mathcal{P}(I,\intervalset)$ whether they are selected for the $k$-overlap set ($J \in \mathcal J$) or not ($J \not\in \mathcal J$). All affected capacities are updated accordingly. Capacities can only be decreased and if an interval $[i,j]$ has been discarded previously in $\lambda$, i.e., $\lambda_{i,j} = \infty$, then it cannot be contained in $\mathcal J$.

			\begin{definition}
				Let \intervalset be a set of intervals, $ \iI \in \intervalset $  and $ \lambda $ a valid capacity vector for $ \iI $, then 
							\begin{align*}
				L'(\lambda,\iI,\intervalset) = \{ \lambda' \mid \lambda' \leftarrow_\iI \lambda \}
							\end{align*}
				is the set of all legal successors of $\lambda$ for $\iI$.

			\end{definition}

					We now assume that all $ \dms{k}_\lambda(\iJ) $ with $ \iJ \in \intervalset $ and $ \length(\iJ) < \length(\iI) $ are already computed. The following recurrence computes one $ \dms{k}_\lambda(\iI) $ value given a valid capacity vector $ \lambda $ and an interval $ \iI = [a,b] \in \intervalset $
			\begin{align}
			\label{rec:gen_dmsApx}
			\dms{k}_\lambda([a,b]) = S_{I,\lambda}[a+1] + w(\iI).
			\end{align}
			
		It remains to describe the computation of the recurrence for $S_{I,\lambda}$. 
		To consider the correct weight for the optimal solution we have to pay attention whenever a capacity is changed from $ \bot $ to some $ \alpha \in [0,k] $. We introduce the following notation for the set of all intervals  changed between capacity vectors $ \lambda' $ and $ \lambda $
\begin{align*}
\text{new}(\lambda',\lambda,\iK,\intervalset') &= \{ [i,j] \mid [i,j] \in {\mathcal{P}}(\iK,\intervalset') \land \lambda_{i,j} = \bot \land 0 \le \lambda_{i,j}' \le k \}
\end{align*}
and for the set of all intervals overlapping a given interval, but not in the set $ \text{new}(\lambda',\lambda,\iK,\intervalset') $
\begin{align*}
\text{old}(\lambda', \lambda,\iK,\intervalset') &= \{ [i,j] \mid [i,j] \in \mathcal{P}(\iK,\intervalset')\setminus\text{new}(\lambda',\lambda,\iK,\intervalset')) \land 0 \le \lambda_{i,j} \le k \}.
\end{align*}
Secondly we introduce the weight function $ w(\lambda',\lambda,\iK,\intervalset') $, which for two capacity vectors $ \lambda' $ and $ \lambda $ and one interval $ \iK $ computes the weight of all intervals newly considered in $ \lambda' $, minus the weight of all overlaps (edges) involving new intervals
\begin{align*}
w(\lambda',\lambda,\iK,\intervalset') &= \sum_{L \in \text{new}(\lambda',\lambda,\iK,\intervalset')} w(L) -
\Bigg(\sum_{L \in \text{new}(\lambda',\lambda,\iK,\intervalset')}\\
&\Bigg(w(\iK,L) + \frac{1}{2}\sum_{\substack{M\in \text{new}(\lambda',\lambda,\iK,\intervalset')\\\cap \mathcal{P}(L,\intervalset')}} w(L,M) +
\sum_{\substack{M \in \text{old}(\lambda',\lambda,\iK,\intervalset')\\\cap \mathcal{P}(L,\intervalset')}} 
w(L,M)\Bigg)\Bigg).
\end{align*}

		Recall Equation~(\ref{rec:gen_dms}) for $\dms{k}_\lambda([a,b])$.
		As in the approach for the $ \dms{1} $ values the main work is done in recurrence $ S_{I,\lambda}[x] $ where $ x \in \sigma(\intervalset[a,b]) $. 
We set $ S_{I,\lambda}[b] = 0 $ and for any right endpoint $ d $ of an interval $ \iJ = [c,d] \in \intervalset[a,b] $ we set $ S_{I,\lambda}[d] = S_{I,\lambda}[d + 1] $. For each left endpoint $ c $ we use the following maximization
\[
S_{I,\lambda}[c] = \max\{ \{S_{I,\lambda''}[c+1]\} \cup \{H_{I,\lambda'}([c,d]) \mid \lambda' \in L'(\lambda,[c,d], \intervalset %
\} \},
\]
where $\lambda'' \leftarrow_I \lambda $ with $\lambda_{c,d}'' = \infty$ and		
$ H_{I,\lambda'}([c,d]) = \dms{k}_{\lambda'}([c,d]) + S_{I,\lambda'}[d + 1] + w(\lambda',\lambda,\iI,\intervalset[a,b]) $.

\begin{lemma}\label{lem:successor_size}
	Let \intervalset be a set of intervals, $\iI \in \intervalset$, and $\gamma$ the maximum degree of the corresponding overlap graph. 
	\begin{enumerate}
		\item There are $O(\gamma^k (k+1)^k)$ valid basic capacity vectors for $\iI$.
		\item For each valid capacity vector $\lambda$ the set $L'(\lambda,\iI,\intervalset)$ of legal successors has $O(\gamma^k (k+1)^k)$ elements.
	\end{enumerate}
\end{lemma}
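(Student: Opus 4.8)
The plan is to prove both parts by a direct counting argument, in each case reducing a capacity vector (respectively a legal successor) to a small number of independent discrete choices and then bounding the number of such choices using $|\mathcal{P}(\iI,\intervalset)|\le\gamma$.

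For part~1, I would start from the fact that a \emph{basic} capacity vector for $\iI=[a,b]$ is, by definition, determined by its values on the $O(\gamma)$ distinguished positions that are endpoints of $\iI$ or of an interval in $\mathcal{P}(\iI,\intervalset)$ (using $|\mathcal{P}(\iI,\intervalset)|\le\gamma$); all other entries equal $\bot$. The count then proceeds in two stages. First, by validity for $\iI$ at most $k$ of the intervals in $\mathcal{P}(\iI,\intervalset)$ may receive a capacity other than $\infty$, so I fix which those are; this gives $\sum_{j=0}^{k}\binom{\gamma}{j}=O(\gamma^k)$ options, and all remaining neighbour endpoints are thereby forced to $\infty$. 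Second, I assign the actual finite values: the two endpoints of $\iI$ must lie in $\{0,\dots,k\}$, and there are $O(k)$ further relevant positions (the endpoints of the selected neighbours lying in the range under consideration), each admitting at most $k+1$ values, so this stage contributes a factor bounded by $(k+1)^{O(k)}$. Multiplying the two counts, and absorbing factors that depend on $k$ alone into the $(k+1)^k$ term (the $k$-dependence being left unoptimized here, as elsewhere in this section), yields the claimed bound $O(\gamma^k(k+1)^k)$.

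For part~2, I would unwind the definition of $\lambda'\leftarrow_\iI\lambda$ and show that a legal successor is completely determined by the pair consisting of (i)~the set $\mathcal{J}\subseteq\mathcal{P}(\iI,\intervalset)$ of selected neighbours, subject to $|\mathcal{J}|\le k$ and $\lambda_{i,j}\ne\infty$ for all $[i,j]\in\mathcal{J}$, and (ii)~for every $[x,y]\in\mathcal{J}$ with $\lambda_{x,y}=\bot$, the split value $\alpha\in[0,k-t_{x,y}]$ that the definition uses to set $\lambda'_x=\alpha$ and $\lambda'_y=k-t_{x,y}-\alpha$. Indeed, once $(\mathcal{J},\alpha)$ is fixed the remaining clauses are deterministic: $\lambda'_{x,y}=0$ on $\iI$, $\lambda'_{x,y}=\infty$ on the discarded neighbours in $\mathcal{P}(\iI,\intervalset)\setminus\mathcal{J}$, the decrements $\lambda'_x=\lambda_x-l_{x,y}$ and $\lambda'_y=\lambda_y-r_{x,y}$ on the already-decided intervals of $\mathcal{J}$, and $\lambda'$ unchanged elsewhere, while $t_{x,y},l_{x,y},r_{x,y}$ depend only on $\mathcal{J}$, $\lambda$ and $\iI$. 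Hence $|L'(\lambda,\iI,\intervalset)|$ is at most the number of such pairs, which is $\sum_{j=0}^{k}\binom{\gamma}{j}=O(\gamma^k)$ choices for $\mathcal{J}$ times at most $(k+1)^{|\mathcal{J}|}\le(k+1)^k$ choices of splits, i.e.\ $O(\gamma^k(k+1)^k)$.

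I expect the one step needing care to be the claim in part~2 that the assignment $(\mathcal{J},\alpha)\mapsto\lambda'$ is well defined and surjects onto $L'(\lambda,\iI,\intervalset)$: I would check that every coordinate of a legal successor is indeed pinned down by the clauses of the definition once the selected neighbours and their capacity splits are chosen — in particular that $t_{x,y}$, $l_{x,y}$ and $r_{x,y}$ are functions of the already-fixed data alone, so that no processing-order ambiguity arises. The remaining ingredients are routine: bounding $|\mathcal{P}(\iI,\intervalset)|$ by the maximum degree $\gamma$, the binomial estimate $\sum_{j\le k}\binom{\gamma}{j}=O(\gamma^k)$ for fixed $k$, and the exponent bookkeeping for the $(k+1)^k$ factor, whose precise dependence on $k$ is not optimized here.
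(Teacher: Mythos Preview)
Your proposal is correct and follows essentially the same counting argument as the paper: choose up to $k$ neighbours out of $|\mathcal{P}(\iI,\intervalset)|\le\gamma$ (giving $O(\gamma^k)$), then choose a value in $\{0,\dots,k\}$ for each of $O(k)$ relevant positions (giving $(k+1)^{O(k)}$), and similarly for part~2 with the split values $\alpha$. If anything, your treatment is more careful than the paper's own, which does not spell out the surjectivity of $(\mathcal{J},\alpha)\mapsto\lambda'$ or the determinism of $t_{x,y},l_{x,y},r_{x,y}$.
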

\begin{proof}
	Since the set $\mathcal{P}(I,\intervalset)$ has at most $\gamma$ elements, there are at most $\gamma$ positions in any capacity vector that influence the options for $\iI$. We can ignore the capacities of positions that do not belong to intervals in $\mathcal{P}(I,\intervalset) \cup \{I\}$ and consider all capacity vectors to belong to the same class as long as they coincide on the capacities of positions belonging to $\mathcal{P}(I,\intervalset) \cup \{I\}$. We call each such class a \emph{basic} capacity vector for $\iI$. To be valid for $\iI$ at most $k$ of those positions in $\lambda$ may have a value in $\{0,1,\dots, k\}$. There are $O(\binom{\gamma}{k}) = O(\gamma^k)$ different combinations of up to $k$ positions, and in each position up to $k+1$ different values, which yields the first bound.
	
	For a valid capacity vector $\lambda$, a legal successor is determined by considering all $O(\gamma^k)$ choices for a set $\mathcal{J}$ of at most $k$ intervals, and for each chosen interval there are up to $k+1$ ways of splitting the capacities between left and right index. This gives again a bound of $O(\gamma^k (k+1)^k)$.
\end{proof}

In the following we define $ z = \gamma^{k}(k+1)^{k} $ as a shorthand. The next lemma is a more formal version of Lemma~\ref{lem:fix_lambda} stated in Section~\ref{sec:general}.

			\begin{lemma}
				\label{lem:fix_lambdaApx}
						Let \intervalset be a set of intervals, $\iI \in \intervalset$, and $\lambda$ a valid capacity vector for $\iI$ then the value $ \dms{k}_\lambda(\iI) $ can be computed in $ O(z\length(\iI))$ time once the $ \dms{k}_{\lambda'}(\iJ) $ values are computed for all $ \iJ \in \intervalset $ with $ \length(\iJ) < \length(\iI) $ and $ \lambda' $ a valid basic capacity vector for $\iJ$.
			\end{lemma}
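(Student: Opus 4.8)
The plan is to prove the correctness of Recurrence~(\ref{rec:gen_dmsApx}) together with the recurrence for $S_{I,\lambda}$ by induction on the number of intervals nested in $\iI=[a,b]$, in close analogy with the proof of Lemma~\ref{lem:dmsI} for $k=1$, and then to read off the running time from the successor-count bound of Lemma~\ref{lem:successor_size}. The statement to establish is that for every position $x$ in the range of the recurrence, $S_{I,\lambda}[x]$ equals the maximum weight of a $k$-overlap set chosen among the intervals of $\mathcal{N}(\iI,\intervalset)$ that are contained in $\intervalset[x,b]$, counted so that it additionally credits the overlaps of the selected intervals with $\iI$ and with the intervals of $\mathcal{P}(\iI,\intervalset)$ already decided in $\lambda$, and subject to the constraint that no interval receives more overlaps than its residual capacity in $\lambda$ allows. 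Given this, Recurrence~(\ref{rec:gen_dmsApx}) is correct, since selecting $\iI$ contributes precisely $\weight{\iI}$ and leaves exactly the subinstance on $\mathcal{N}(\iI,\intervalset)$ with capacities $\lambda$, which $S_{I,\lambda}[a+1]$ solves optimally; every $\iJ\in\mathcal{N}(\iI,\intervalset)$ has $\length(\iJ)<\length(\iI)$, so all values $\dms{k}_{\lambda'}(\iJ)$ that appear below are available by the hypothesis of the lemma.

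For the induction, the base case $\mathcal{N}(\iI,\intervalset)=\emptyset$ forces $b=a+1$, hence $S_{I,\lambda}[a+1]=S_{I,\lambda}[b]=0$ and $\dms{k}_\lambda([a,b])=\weight{\iI}$, which is optimal. For the inductive step let $c$ be the smallest left endpoint among the nested intervals and $[c,d]$ the corresponding interval (the copy rule $S_{I,\lambda}[d]=S_{I,\lambda}[d+1]$ at right endpoints introduces no choice). An optimal $k$-overlap set either discards $[c,d]$, in which case it is an optimal solution of the strictly smaller instance obtained by setting the capacities of $c$ and $d$ to $\infty$ — this is the term $S_{I,\lambda''}[c+1]$ — or it contains $[c,d]$, in which case it fixes a set $\mathcal{J}\subseteq\mathcal{P}([c,d],\intervalset)$ with $|\mathcal{J}|\le k$ of overlap partners of $[c,d]$, and this choice is exactly parameterised by one legal successor $\lambda'\in L'(\lambda,[c,d],\intervalset)$. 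Since no interval nested in $[c,d]$ properly overlaps any interval contained in $\intervalset[d,b]$ (their endpoint orders preclude it), the remaining decisions decompose into the subinstance on $\mathcal{N}([c,d],\intervalset)$, solved by $\dms{k}_{\lambda'}([c,d])$, and the subinstance on the nested intervals of $\iI$ to the right of $d$, solved by $S_{I,\lambda'}[d+1]$; the two interact only through the intervals of $\mathcal{P}([c,d],\intervalset)$ that cross a boundary, whose residual capacities $\lambda'$ propagates by definition. The inductive hypothesis applies to $\dms{k}_{\lambda'}([c,d])$ because $\length([c,d])<\length(\iI)$ and to $S_{I,\lambda'}[d+1]$ because that subinstance has strictly fewer nested intervals, so $S_{I,\lambda}[c]$ is computed correctly.

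The genuinely delicate step — and the one I expect to be the main obstacle — is verifying that the correction term $w(\lambda',\lambda,\iI,\intervalset[a,b])$ causes every interval weight and every overlap weight to be counted exactly once. An interval first admitted when passing from $\lambda$ to $\lambda'$ belongs to $\text{new}(\lambda',\lambda,\iI,\intervalset[a,b])$, a set of at most $k$ intervals since it is a subset of $\mathcal{J}$, and must contribute its vertex weight once; an overlap incident to such a new interval can be rediscovered inside $\dms{k}_{\lambda'}([c,d])$, inside $S_{I,\lambda'}[d+1]$, in both, or in neither, depending on where its two endpoints fall relative to $c$, $d$, and the boundary of $[a,b]$, so the factor $\tfrac12$ on new–new pairs and the single count on new–old and new–$[c,d]$ pairs in the definition of $w(\cdot,\cdot,\cdot,\cdot)$ are exactly what is needed to avoid over- or under-counting; I would make this precise by a case analysis over the placement of the two endpoints of each edge of the induced subgraph. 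Once correctness is settled, the running time follows just as in Lemma~\ref{lem:dmsI}: the table $S_{I,\lambda}$ is filled in a single right-to-left sweep over the $O(\length(\iI))$ relevant positions, and the maximization at each left endpoint now ranges over the $O(z)$ legal successors supplied by Lemma~\ref{lem:successor_size} (with $z=\gamma^k(k+1)^k$) instead of the $O(\gamma)$ forward overlaps used for $k=1$, which yields the claimed bound $O(z\,\length(\iI))$.
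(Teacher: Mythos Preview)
Your proposal follows essentially the same route as the paper's proof: induction on $|\mathcal{N}(\iI,\intervalset)|$, the same base case, the discard/use dichotomy at each left endpoint realized via the legal successors $L'(\lambda,[c,d],\intervalset)$, and the running-time bound read off from Lemma~\ref{lem:successor_size}. One small correction to your sketch of the weight analysis: the factor $\tfrac12$ in $w(\lambda',\lambda,\cdot,\cdot)$ is not compensating for rediscovery across the two spatial subinstances $\dms{k}_{\lambda'}([c,d])$ and $S_{I,\lambda'}[d+1]$, but for the fact that a new--new overlap $\{L,M\}$ is visited twice inside the double sum $\sum_{L\in\text{new}}\sum_{M\in\text{new}\cap\mathcal{P}(L)}$ within a single call; the paper's case analysis is temporal (whether the partner interval was fixed in an earlier step, the same step, or is the outer interval of the current $\dms{k}$ call) rather than spatial, and you will want to frame your case analysis that way when you carry it out.
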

				\begin{proof}			
			The proof works similar to the proof of Lemma~\ref{lem:dmsI} for $\dms{1}$ by induction over the size of $ \mathcal{N}(\iI,\intervalset) $. Let $\iI = [a,b]$ and suppose $ \mathcal{N}(\iI,\intervalset) = \emptyset $, then $ \dms{k}_\lambda(\iI) = S_{I,\lambda}[a+1] + w(\iI) $, and, regardless of the capacity vector $ \lambda $,  $ S_{I,\lambda}[a+1] = S_{I,\lambda}[b] = 0 $ by definition.

			Now consider $ \mathcal{N}(\iI,\intervalset) $ with at least one interval. For every left endpoint $ i \in \sigma(\mathcal{N}(\iI,\intervalset)) $ we have the choice of using the interval $ \iJ = [i,j] \in \intervalset $ starting at $ i $ or discarding it. 
			
			When we decide not to use $ \iJ $ the recurrence correctly continues with the next endpoint $i+1$ and sets the capacities of $ i $ and $ j $ to $ \infty $. In case we choose to use $ \iJ $ we have to consider which intervals from $ {\mathcal{P}}(\iJ,\intervalset) $  we additionally take into consideration for the optimal solution. For this we exhaustively consider all possible combinations and capacities of intervals in $ {\mathcal{P}}(\iJ,\intervalset) $. These are exactly the legal successors $ \lambda' \in L'(\lambda,\iJ,\intervalset) $ over which we maximize.
			
			It remains to argue that the weight $ \dms{k}_\lambda(\iI) $ is correct in the end. First consider the weight $ w(\iJ) $ of the intervals $ \iJ =[i,j] \in \intervalset[a,b] $. There are two possibilities how $ w(\iJ) $ can be added. If it is picked as the next interval in a $ \dms{k}_\lambda(\iJ) $ call then $w(\iJ)$ is added in Recurrence~(\ref{rec:gen_dmsApx}). The other possibility is for $ \lambda_{i}' $ and $\lambda_j'$ to be set to some values $ \alpha$ and $ k-t_{x,y} - \alpha $ in a legal successor. In that case we add the weight $w(\iJ)$ in the $ w(\lambda',\lambda,\iJ,\intervalset[a,b]) $ term. If we decide not to consider $ \iJ $ its weight is not included either.
			
			We further have to subtract the weight of all intersecting pairs $ \iJ = [i,j], \iK = [x,y] \in \intervalset[a,b] $ which are both included in the solution set. Without loss of generality this is done in the weight term $ w(\lambda',\lambda,\iK,\intervalset[a,b]) $. There are three cases to consider: (i) we are in a $ \dms{k}_\lambda(\iJ) $ call, (ii) $ \lambda_{i,j} $ was set  some steps before $\lambda_{x,y}$, or (iii) $ \lambda_{i,j} $ is set in the same step as $ \lambda_{x,y} $.
			
			Let's look at the two latter cases first. There must be an interval $ L \in \intervalset[a,b] $ such that we are inside a call to $ \dms{k}_\lambda(L) $. That means we consider the weight term $ w(\lambda', \lambda, L, \intervalset[a,b]) $. We know $ \iK \in \text{new}(\lambda',\lambda,L,\intervalset[a,b]) $. The two cases for $ \iJ $ are $ \iJ \in \text{old}(\lambda',\lambda,L,\intervalset[a,b]) $ and $ \iJ \in \text{new}(\lambda',\lambda,L,\intervalset[a,b]) $. Both are considered in the term $ w(\lambda', \lambda, L, \intervalset[a,b]) $ and we correctly divide by two to correct the double-counting for $ \iJ \in \text{new}(\lambda',\lambda,L,\intervalset[a,b]) $.
			
			In case (i) we are inside a call to $ \dms{k}_\lambda(\iJ) $, and still $ \iK \in \text{new}(\lambda',\lambda,\iJ,\intervalset[a,b]) $. Here we use $ w(\lambda', \lambda, \iJ, \intervalset[a,b]) $, where the weight for newly added edges is subtracted correctly.
			
			The running time of $ O(z\length(\iI)) $ follows from Lemma~\ref{lem:successor_size} (2).
		\end{proof}
	\begin{lemma}
		\label{lem:var_lambdaApx}
		Let \intervalset be a set of intervals and $\iI \in \intervalset$, then the values $ \dms{k}_\lambda(\iI) $ can be computed for all valid basic capacity vectors $ \lambda $ in $ O(z^2\length(\iI)) $ time once the values $ \dms{k}_{\lambda'}(\iJ) $ are computed for all $ \iJ \in \intervalset $ with $ \length(\iJ) < \length(\iI) $ and for all valid basic capacity vectors $ \lambda' $.
	\end{lemma}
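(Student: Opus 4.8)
The statement is essentially the "all capacity vectors" counterpart of Lemma~\ref{lem:fix_lambdaApx}, so the plan is to combine that lemma with the counting bound of Lemma~\ref{lem:successor_size}(1). First I would recall that, by Lemma~\ref{lem:successor_size}(1), the number of valid \emph{basic} capacity vectors for a fixed interval $\iI=[a,b]$ is $O(\gamma^k(k+1)^k)=O(z)$, since only the $O(\gamma)$ positions belonging to intervals in $\mathcal P(\iI,\intervalset)\cup\{\iI\}$ matter and at most $k$ of them may carry a value in $\{0,\dots,k\}$. Then, for each single such vector $\lambda$, Lemma~\ref{lem:fix_lambdaApx} shows that $\dms{k}_\lambda(\iI)$ can be computed in $O(z\length(\iI))$ time, provided all values $\dms{k}_{\lambda'}(\iJ)$ for $\iJ\in\intervalset$ with $\length(\iJ)<\length(\iI)$ and $\lambda'$ a valid basic capacity vector for $\iJ$ are already available — which is exactly the precondition granted to us here. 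Multiplying the $O(z)$ choices of $\lambda$ by the $O(z\length(\iI))$ cost per choice yields the claimed bound $O(z^2\length(\iI))$.

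\textbf{Implementation detail to make the bound honest.} A direct recursive realization of Recurrence~(\ref{rec:gen_dmsApx}) is slightly delicate, because the entry $S_{I,\lambda}[c]$ for one capacity vector $\lambda$ is defined in terms of $S_{I,\lambda''}[c+1]$ and of $S_{I,\lambda'}[d+1]$ for legal successors $\lambda'\leftarrow_I\lambda$, i.e.\ in terms of the $S$-tables of \emph{other} basic capacity vectors for the same interval $\iI$. To avoid recomputation I would build all these tables simultaneously: scan the endpoints of $\intervalset[a,b]$ from right to left, and at each endpoint $x$ fill in $S_{I,\lambda}[x]$ for \emph{every} valid basic capacity vector $\lambda$ for $\iI$ at once, using the stated maximization (the discard option $S_{I,\lambda''}[c+1]$ together with the terms $H_{I,\lambda'}([c,d])$ over the legal successors $\lambda'\in L'(\lambda,[c,d],\intervalset)$). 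By Lemma~\ref{lem:successor_size}(2) each such maximization ranges over $O(z)$ terms, so each of the $O(z)$ tables is filled in $O(z\length(\iI))$ total time, and finally $\dms{k}_\lambda(\iI)=S_{I,\lambda}[a+1]+w(\iI)$ is read off via Recurrence~(\ref{rec:gen_dmsApx}). Correctness of each individual table is precisely Lemma~\ref{lem:fix_lambdaApx}; nothing new needs to be proved about the weight bookkeeping, since the weight terms $w(\lambda',\lambda,\cdot,\intervalset[a,b])$ are evaluated within each table exactly as there.

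\textbf{Main obstacle.} The only genuinely new content beyond Lemma~\ref{lem:fix_lambdaApx} is the counting, and the subtle point is verifying that iterating over all $\lambda$ really costs only a single extra factor of $z$ rather than something larger: one has to check that the $H$-terms in the simultaneous computation only ever reference $\dms{k}_{\lambda'}([c,d])$ for intervals $[c,d]$ strictly nested in $\iI$ (hence shorter than $\iI$, hence covered by the precondition) and $S$-values at strictly larger endpoints within the same sweep, so there is no circularity and no hidden blow-up. Once this dependency structure is pinned down, the bound $O(z^2\length(\iI))$ is immediate from Lemma~\ref{lem:successor_size}(1) and Lemma~\ref{lem:fix_lambdaApx}. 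I expect the rest to be routine.
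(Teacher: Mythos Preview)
Your proposal is correct and follows essentially the same approach as the paper: the paper's proof is literally two sentences that invoke Lemma~\ref{lem:fix_lambdaApx} once per valid basic capacity vector and multiply by the $O(z)$ bound from Lemma~\ref{lem:successor_size}(1). Your additional paragraph on the simultaneous right-to-left sweep and the acyclicity of the $S$-table dependencies is more careful than what the paper spells out, but it is not a different route---just a fuller justification of the same argument.
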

	\begin{proof}
		We apply Lemma~\ref{lem:fix_lambdaApx} for each valid basic capacity vector of $\iI$. By Lemma~\ref{lem:successor_size} (1) there are $O(z)$ such capacity vectors, which implies the time bound of $ O(z^2\length(\iI)) $.
	\end{proof}
	\begin{lemma}
		\label{lem:gen_computeall}
		Let \intervalset be a set of intervals, then the values $ \dms{k}_\lambda(\iI) $ can be computed for all $ \iI \in \intervalset $ and for all valid basic capacity vectors $\lambda$ in $ O(z^2\ell) $ time.
	\end{lemma}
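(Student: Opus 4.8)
The plan is to process the intervals in order of non-decreasing length and to invoke Lemma~\ref{lem:var_lambdaApx} exactly once per interval. First I would make explicit that the dependency structure among the values $\dms{k}_\lambda(\cdot)$ respects this order. By Recurrence~(\ref{rec:gen_dmsApx}) and the definition of $S_{I,\lambda}$, the value $\dms{k}_\lambda(\iI)$ for $\iI=[a,b]$ is assembled only from values $\dms{k}_{\lambda'}([c,d])$ with $[c,d]\in\intervalset[a,b]$ and $a<c$, together with weight terms $w(\lambda',\lambda,\iI,\intervalset[a,b])$ that involve no further recursive call. Any such $[c,d]$ satisfies $a<c<d\le b$, hence $\length([c,d]) = d-c < b-a = \length(\iI)$. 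Thus each value $\dms{k}_\lambda(\iI)$ depends only on $\dms{k}_{\lambda'}$ values of strictly shorter intervals, so the dependency relation is acyclic and refines the ordering of intervals by length.

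Next I would sort $\intervalset=\{I_1,\dots,I_n\}$ so that $\length(I_1)\le\cdots\le\length(I_n)$, breaking ties arbitrarily; since all interval lengths are integers in $\{1,\dots,2n\}$, this costs $O(n)$ time, which is dominated by the bound claimed below. Then I would iterate over $i=1,\dots,n$. When interval $I_i$ is reached, every interval $\iJ$ with $\length(\iJ)<\length(I_i)$ has already been processed, so by the induction hypothesis all values $\dms{k}_{\lambda'}(\iJ)$ are available for those $\iJ$ and all valid basic capacity vectors $\lambda'$ (the base case $\mathcal{N}(I_1,\intervalset)=\emptyset$ is precisely the empty-set case handled inside the proof of Lemma~\ref{lem:fix_lambdaApx}). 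The hypothesis of Lemma~\ref{lem:var_lambdaApx} is thereby satisfied, and that lemma computes all values $\dms{k}_\lambda(I_i)$ over valid basic capacity vectors $\lambda$ in $O(z^2\length(I_i))$ time.

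Finally, summing the per-interval cost yields a total running time of $\sum_{i=1}^n O(z^2\length(I_i)) = O\!\bigl(z^2\sum_{\iI\in\intervalset}\length(\iI)\bigr) = O(z^2\ell)$, which is the claimed bound. I do not expect a genuine obstacle here: the only point requiring care is spelling out the acyclicity of the dependency structure and justifying the length-based processing order; once that is in place, the statement follows purely by aggregating the per-interval bound of Lemma~\ref{lem:var_lambdaApx}.
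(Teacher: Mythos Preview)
Your proposal is correct and follows essentially the same approach as the paper: apply Lemma~\ref{lem:var_lambdaApx} once per interval in order of increasing length and sum the $O(z^2\length(\iI))$ costs to obtain $O(z^2\ell)$. The paper's proof is a one-liner that leaves the processing order implicit; your version just spells out the acyclicity of the dependency structure and the sorting step, which is fine but not additional content.
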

	\begin{proof}
		We apply Lemma~\ref{lem:var_lambdaApx} for every $\iI \in \intervalset$ and obtain the running time of $ O(z^2\ell) $ since $ \sum_{\iI \in \intervalset}\length(\iI) = \ell $.
	\end{proof}
		\thmgeneral*

\begin{proof}
	For computing the maximum weight of a $k$-overlap set of $\intervalset$ we can introduce a dummy interval $\hat{I}=[0,2n+1]$ with weight $w(\hat{I})=0$ that nests the entire set $\intervalset$, i.e., $\mathcal{N}(\hat{I},\intervalset \cup \{\hat{I}\}) = \intervalset$. We define the initial capacity vector $\hat{\lambda}$ that has $\hat{\lambda}_{0,2n+1} = 0$ and all other $\hat{\lambda}_i = \bot$. Using the computation scheme of Lemma~\ref{lem:gen_computeall} we obtain in $O(z^2 \ell) = O((k+1)^{2k} \gamma^{2k} \ell) = O(\gamma^{2k} \ell)$ time all values $\dms{k}_\lambda(\iI)$, including $\dms{k}_{\hat{\lambda}}(\hat{I})$. Note that for this time bound we use that $k$ is an arbitrary but fixed integer and thus $O((k+1)^k) = O(1)$. Clearly the solution corresponding to the value $\dms{k}_{\hat{\lambda}}(\hat{I})$ includes a max-weight $k$-overlap set for $\mathcal{N}(\hat{I},\intervalset \cup \{\hat{I}\}) = \intervalset$.
\end{proof}

\section{Additional plots and figures}\label{apx:plots}

\begin{figure}[!h]
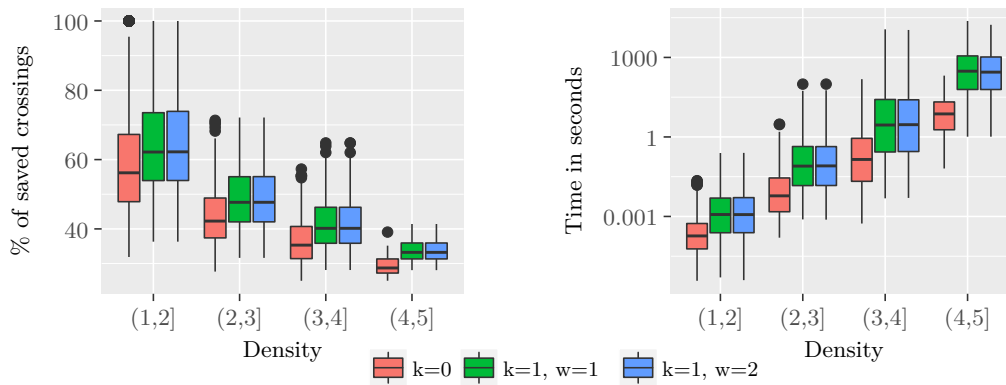


			\centering
\begin{subfigure}[t]{0.48\textwidth}
	\input{plots/random4/ag_saved_crossings.tex}
	\subcaption{Density vs the percentage of saved crossings.}
	\label{fig:saved_crossings:random3}
\end{subfigure}
\hfill
\begin{subfigure}[t]{0.48\textwidth}
	\input{plots/random4/ag_time.tex}
	\subcaption{Density vs the computation time.}				
	\label{fig:computation_time:random3}				
\end{subfigure}
\begin{subfigure}[t]{0.5\textwidth}
	\vspace{-5.2cm}
	\hspace{-0.55cm}
	% Created by tikzDevice version 0.10.1 on 2017-12-11 19:33:30
% !TEX encoding = UTF-8 Unicode
\begin{tikzpicture}[x=1pt,y=1pt]
\begin{scope}
\definecolor{drawColor}{RGB}{255,255,255}
\definecolor{fillColor}{gray}{0.95}

\path[draw=drawColor,line width= 0.6pt,line join=round,line cap=round,fill=fillColor] ( 52.44, 11.19) rectangle ( 66.89, 25.64);
\end{scope}
\begin{scope}
\definecolor{drawColor}{gray}{0.20}

\path[draw=drawColor,line width= 0.6pt,line join=round,line cap=round] ( 59.66, 12.64) --
	( 59.66, 14.80);

\path[draw=drawColor,line width= 0.6pt,line join=round,line cap=round] ( 59.66, 22.03) --
	( 59.66, 24.20);
\definecolor{fillColor}{RGB}{248,118,109}

\path[draw=drawColor,line width= 0.6pt,line join=round,line cap=round,fill=fillColor] ( 54.24, 14.80) rectangle ( 65.08, 22.03);

\path[draw=drawColor,line width= 0.6pt,line join=round,line cap=round] ( 54.24, 18.42) --
	( 65.08, 18.42);
\end{scope}
\begin{scope}
\path[clip] (  0.00,  0.00) rectangle (180.67,144.54);
\definecolor{drawColor}{RGB}{255,255,255}
\definecolor{fillColor}{gray}{0.95}

\path[draw=drawColor,line width= 0.6pt,line join=round,line cap=round,fill=fillColor] ( 86.39, 11.19) rectangle (100.84, 25.64);
\end{scope}
\begin{scope}
\definecolor{drawColor}{gray}{0.20}

\path[draw=drawColor,line width= 0.6pt,line join=round,line cap=round] ( 93.62, 12.64) --
	( 93.62, 14.80);

\path[draw=drawColor,line width= 0.6pt,line join=round,line cap=round] ( 93.62, 22.03) --
	( 93.62, 24.20);
\definecolor{fillColor}{RGB}{0,186,56}

\path[draw=drawColor,line width= 0.6pt,line join=round,line cap=round,fill=fillColor] ( 88.20, 14.80) rectangle ( 99.04, 22.03);

\path[draw=drawColor,line width= 0.6pt,line join=round,line cap=round] ( 88.20, 18.42) --
	( 99.04, 18.42);
\end{scope}
\begin{scope}
\definecolor{drawColor}{RGB}{255,255,255}
\definecolor{fillColor}{gray}{0.95}

\path[draw=drawColor,line width= 0.6pt,line join=round,line cap=round,fill=fillColor] (146.00, 11.19) rectangle (160.46, 25.64);
\end{scope}
\begin{scope}
\definecolor{drawColor}{gray}{0.20}

\path[draw=drawColor,line width= 0.6pt,line join=round,line cap=round] (153.23, 12.64) --
	(153.23, 14.80);

\path[draw=drawColor,line width= 0.6pt,line join=round,line cap=round] (153.23, 22.03) --
	(153.23, 24.20);
\definecolor{fillColor}{RGB}{97,156,255}

\path[draw=drawColor,line width= 0.6pt,line join=round,line cap=round,fill=fillColor] (147.81, 14.80) rectangle (158.65, 22.03);

\path[draw=drawColor,line width= 0.6pt,line join=round,line cap=round] (147.81, 18.42) --
	(158.65, 18.42);
\end{scope}
\begin{scope}
\definecolor{drawColor}{RGB}{0,0,0}

\node[text=drawColor,anchor=base west,inner sep=0pt, outer sep=0pt, scale=  0.88] at ( 68.70, 15.39) {k=0};
\end{scope}
\begin{scope}
\definecolor{drawColor}{RGB}{0,0,0}

\node[text=drawColor,anchor=base west,inner sep=0pt, outer sep=0pt, scale=  0.88] at (102.65, 15.39) {k=1, w=1};
\end{scope}
\begin{scope}
\definecolor{drawColor}{RGB}{0,0,0}

\node[text=drawColor,anchor=base west,inner sep=0pt, outer sep=0pt, scale=  0.88] at (162.26, 15.39) {k=1, w=2};
\end{scope}
%\draw [red] (current bounding box.south east) rectangle (current bounding box.north west);
\end{tikzpicture}
\end{subfigure}

\caption{Plots for the test set with 5156 graphs. $ w $ is the weight given to the edges of the circle graph, see Section~\ref{sub:problem_transformation}. This set is larger, but the number of vertices is lower (20 to 50), while the average density is a little bit higher (3.0 compared to 2.6). Nonetheless one observes the same behaviour as for the other test instance.}
\label{fig:saved:random4}		
\end{figure}

\begin{figure}[!h]
			\centering
\begin{subfigure}[t]{\textwidth}
	\input{plots/rome/ag_saved_crossings.tex}
	\subcaption{Density vs the percentage of saved crossings.}
	\label{fig:saved_crossings:rome}
\end{subfigure}

\begin{subfigure}[t]{\textwidth}
	\input{plots/rome/ag_time.tex}
	\subcaption{Density vs the computation time.}				
	\label{fig:computation_time:rome}				
\end{subfigure}
\begin{subfigure}[t]{0.5\textwidth}
	\vspace{-4.8cm}
	\hspace{2.5cm}
	% Created by tikzDevice version 0.10.1 on 2017-12-11 19:33:30
% !TEX encoding = UTF-8 Unicode
\begin{tikzpicture}[x=1pt,y=1pt]
\begin{scope}
\definecolor{drawColor}{RGB}{255,255,255}
\definecolor{fillColor}{gray}{0.95}

\path[draw=drawColor,line width= 0.6pt,line join=round,line cap=round,fill=fillColor] ( 52.44, 11.19) rectangle ( 66.89, 25.64);
\end{scope}
\begin{scope}
\definecolor{drawColor}{gray}{0.20}

\path[draw=drawColor,line width= 0.6pt,line join=round,line cap=round] ( 59.66, 12.64) --
	( 59.66, 14.80);

\path[draw=drawColor,line width= 0.6pt,line join=round,line cap=round] ( 59.66, 22.03) --
	( 59.66, 24.20);
\definecolor{fillColor}{RGB}{248,118,109}

\path[draw=drawColor,line width= 0.6pt,line join=round,line cap=round,fill=fillColor] ( 54.24, 14.80) rectangle ( 65.08, 22.03);

\path[draw=drawColor,line width= 0.6pt,line join=round,line cap=round] ( 54.24, 18.42) --
	( 65.08, 18.42);
\end{scope}
\begin{scope}
\path[clip] (  0.00,  0.00) rectangle (180.67,144.54);
\definecolor{drawColor}{RGB}{255,255,255}
\definecolor{fillColor}{gray}{0.95}

\path[draw=drawColor,line width= 0.6pt,line join=round,line cap=round,fill=fillColor] ( 86.39, 11.19) rectangle (100.84, 25.64);
\end{scope}
\begin{scope}
\definecolor{drawColor}{gray}{0.20}

\path[draw=drawColor,line width= 0.6pt,line join=round,line cap=round] ( 93.62, 12.64) --
	( 93.62, 14.80);

\path[draw=drawColor,line width= 0.6pt,line join=round,line cap=round] ( 93.62, 22.03) --
	( 93.62, 24.20);
\definecolor{fillColor}{RGB}{0,186,56}

\path[draw=drawColor,line width= 0.6pt,line join=round,line cap=round,fill=fillColor] ( 88.20, 14.80) rectangle ( 99.04, 22.03);

\path[draw=drawColor,line width= 0.6pt,line join=round,line cap=round] ( 88.20, 18.42) --
	( 99.04, 18.42);
\end{scope}
\begin{scope}
\definecolor{drawColor}{RGB}{255,255,255}
\definecolor{fillColor}{gray}{0.95}

\path[draw=drawColor,line width= 0.6pt,line join=round,line cap=round,fill=fillColor] (146.00, 11.19) rectangle (160.46, 25.64);
\end{scope}
\begin{scope}
\definecolor{drawColor}{gray}{0.20}

\path[draw=drawColor,line width= 0.6pt,line join=round,line cap=round] (153.23, 12.64) --
	(153.23, 14.80);

\path[draw=drawColor,line width= 0.6pt,line join=round,line cap=round] (153.23, 22.03) --
	(153.23, 24.20);
\definecolor{fillColor}{RGB}{97,156,255}

\path[draw=drawColor,line width= 0.6pt,line join=round,line cap=round,fill=fillColor] (147.81, 14.80) rectangle (158.65, 22.03);

\path[draw=drawColor,line width= 0.6pt,line join=round,line cap=round] (147.81, 18.42) --
	(158.65, 18.42);
\end{scope}
\begin{scope}
\definecolor{drawColor}{RGB}{0,0,0}

\node[text=drawColor,anchor=base west,inner sep=0pt, outer sep=0pt, scale=  0.88] at ( 68.70, 15.39) {k=0};
\end{scope}
\begin{scope}
\definecolor{drawColor}{RGB}{0,0,0}

\node[text=drawColor,anchor=base west,inner sep=0pt, outer sep=0pt, scale=  0.88] at (102.65, 15.39) {k=1, w=1};
\end{scope}
\begin{scope}
\definecolor{drawColor}{RGB}{0,0,0}

\node[text=drawColor,anchor=base west,inner sep=0pt, outer sep=0pt, scale=  0.88] at (162.26, 15.39) {k=1, w=2};
\end{scope}
%\draw [red] (current bounding box.south east) rectangle (current bounding box.north west);
\end{tikzpicture}
\end{subfigure}

\caption{Plots for the Rome graphs. $ w $ is the weight given to the edges of the circle graph, see Section~\ref{sub:problem_transformation}. The parabola like behaviour of the means in Fig.~\ref{fig:saved_crossings:rome} and~\ref{fig:computation_time:rome} seems to be an effect of the structure of the Rome graphs since our randomly generated graphs do not exhibit the same behaviour.}
\label{fig:saved:rome}		
\end{figure}

\end{document}